\theoremstyle{plain}
\newtheorem{theorem}{Theorem}[section]
\newtheorem{lemma}{Lemma}[section]
\theoremstyle{definition}
\newtheorem{example}{Example}
\begin{document}	
\title[Binary LCD group codes]{Classification and count of binary linear complementary dual group codes}
\author[Shaw, Bhowmick, Bagchi]{Ankan Shaw, Sanjit Bhowmick, Satya Bagchi}
\newcommand{\acr}{\newline\indent}

\address[Shaw]{Department of Mathematics\\ National Institute of Technology Durgapur\\ Durgapur 713209\\ West Bengal\\ INDIA}
\email{ankanf22@gmail.com}

\address[Bhowmick]{Department of Mathematics\\ National Institute of Technology Durgapur\\ Durgapur 713209\\ West Bengal\\ INDIA}
\email{sanjitbhowmick392@gmail.com}

\address[Bagchi]{Department of Mathematics\\ National Institute of Technology Durgapur\\ Durgapur 713209\\ West Bengal\\ INDIA}
\email{satya.bagchi@maths.nitdgp.ac.in}

\thanks{The research of Mr. Ankan Shaw is supported by CSIR HRDG in terms of Junior Research Fellowship. This work was also supported by National Board of Higher Mathematics (NBHM), Government
of India (Grant No. 02011/2/2019 NBHM(R.P)/R\&D II/1092).} 

\subjclass[2020]{Primary 94B05, 94B15, 20C05}
\keywords{Group code, LCD code, cyclic code, MDS code, cyclotomic coset}

\begin{abstract}
We establish a complete classification of binary group codes with complementary duals for a finite group and explicitly determine the number of linear complementary dual (LCD) cyclic group codes by using cyclotomic cosets. The dimension and the minimum distance for LCD group codes are explored. Finally, we find a connection between LCD MDS group codes and maximal ideals.  
\end{abstract}

\maketitle


\section{Introduction}
A linear code $ C $ of length $ n $ is $ \mathbb{F}_2$-linear  subspace of $ \mathbb{F}_2^n $. An element $ c = (c_1, c_2, \dots, c_n)\in C $ is called a codeword of $ C $. The (Hamming) weight of $ c $ is denoted by $ wt(c) $ and defined by $ wt(c) = |\{ i\in\{1, 2, \dots, n\} : c_i\ne 0 \}|$. The minimum distance of a linear code $ C $ is denoted by $ d(C) $ and defined by $ d(C) = \min \left\{ wt(c) : c\in C \setminus \{0\}  \right\}$. An $[ n, k, d ] $ code is a linear code of length $ n $, dimension $ k $ and minimum distance $ d $ over $ \mathbb{F}_2 $.
 
 Through out this article, we consider $ \mathcal{G} $ is a finite group of order $ n $ with the identity element $ g_0 $. A group algebra $ \mathbb{F}_2 \mathcal{G}  $ denotes the collection of all formal sum such that  $  \mathbb{F}_2  \mathcal{G}  = \left\{ \sum_{ g\in  \mathcal{G} } a_g g : a_g\in \mathbb{F}_2 \right\}$ endowed with the binary operations $ `` + " $ and $ `` * " $. For $ a = \sum_{ g\in  \mathcal{G} } a_g g $ and $ b = \sum_{ g\in  \mathcal{G} } b_g g $ belong to $ \mathbb{F}_2  \mathcal{G}  $, the binary operations $ `` + " $ and $ `` * " $ are defined as $ a + b =  \sum_{ g\in  \mathcal{G} } (a_g + b_g ) g $ and $ a * b = ab = \sum_{ g\in  \mathcal{G} } \,(\sum_{ h\in G} a_h b_{h^{-1} }g ) \,g $ respectively.
 
We note that, under these operations, $ \mathbb{F}_2  \mathcal{G}  $ becomes a $ \mathbb{F}_2$-vector space with the element of $ \mathcal{G} $ as a basis element. We can easily see that, $ ( \mathbb{F}_2  \mathcal{G}  , + , * )$ is a ring containing $0$ and $g_0$ in its center. Therefore, $ \mathbb{F}_2  \mathcal{G}  $ becomes a $ \mathbb{F}_2$-algebra  via multiplication.

After ordering the elements of $ \mathcal{G} $ as $ g_0, g_1, \dots, g_{n-1} $, we can define a map
\begin{equation}\label{paper1_equn1}
 \mathbb{F}_2  \mathcal{G} \ni g_i \xmapsto{\;\;\Phi\;\;} e_{i+1} \in \mathbb{F}_2 ^n,
\end{equation}

where $ \{e_1, e_2, \dots, e_n \} $ is the standard basis of $ \mathbb{F}_2 ^n $. Here $\Phi$ is an isomorphism. Therefore, the group algebra $  \mathbb{F}_2  \mathcal{G}  $ is isomorphic to $ \mathbb{F}_2^n $ as a $ \mathbb{F}_2$-vector space. In this way, we can transfer many coding theoretical properties from $ \mathbb{F}_2 ^n $ to $ \mathbb{F}_2  \mathcal{G}  $. We extend the above map $ \mathbb{F}_2$-linear  so that $\Phi :\sum_{i=0}^{n-1} a_i g_i \longmapsto ( a_0, a_1, \dots, a_{n-1} )$. Note that the isomorphism $ \Phi $ is not canonical because it depends on the ordering of the group $ \mathcal{G} $. Since different ordering of the elements of $ \mathcal{G} $ leads only to a permutation of the co-ordinates, hence codes are permutation equivalent.

From the above isomorphism, we can transfer the Hamming metric from $ \mathbb{F}_2 ^n $ to $ \mathbb{F}_2  \mathcal{G}  $. Thus for $ a, b \in \mathbb{F}_2  \mathcal{G} $, we define $ wt(a) = wt ( \Phi(a) )$ and a standard inner product of $a$ and $b$ is as follows: \[ \langle a, b\rangle : = \langle \Phi(a),\Phi(b) \rangle = \sum a_g\cdot b_g \in \mathbb{F}_2.\] So this extends the classical duality to the group algebra context. Therefore, from a coding theoretical point of view, we consider linear codes either in $ \mathbb{F}_2  \mathcal{G}  $ or in $ \mathbb{F}_2 ^n $ without any difference.

A group code $ \mathcal{C} $ is a right ideal in $ \mathbb{F}_2  \mathcal{G}   $. If $  \mathcal{G}  $ is abelian (cyclic), then $ \mathcal{C} $ is called abelian (cyclic) group code. The adjoint of $ a \in \mathbb{F}_2  \mathcal{G}  $, is defined as $ \hat a = \sum_{ g\in  \mathcal{G} } a_g g^{-1} $. An element $ a $ is called self-adjoint if $ a = \hat a $. A binary linear code $ [ n , k , d ] $ is said to be maximum distance separable (MDS) if $ n - k +1 = d $. For an element $ a = \sum_{i=0}^{n-1} a_i g_i $ in $  \mathbb{F}_2  \mathcal{G}  $,  $ a_i  $'s are the coefficients of $ a $ and $ g_i $'s are the components of $ a $ for $ i \in [n] := \{1, 2, \dots, n\}$. In this article, we consider $ \mathcal{C} = e \mathbb{F}_2 \mathcal{G}  $ as an LCD  group code and component set of $ e $ as $ M $. The $2$-cyclotomic  coset modulo $n$ containing $i$ is defined by \[ C_i = \left\{i\cdot 2^j \pmod{n} \colon j = 0, 1, 2, \dots \right\} .\] Thus we can partition the group $\mathbb{Z}_{n}$ into cyclotomic cosets $ C_i $, where $i\in \mathbb{N}\cup\{0\}$.

A linear code $C$ is called linear complementary dual (LCD) if $ C\cap C^\perp = \{ 0 \} $. In 1967, Berman \cite{Berman1967} proposed the concept of group codes and then studied abelian group codes and Reed-Muller codes using finite group representations. After a long time, in \cite{delacruz2018}, Cruz and Willems gave a relation between group codes and LCD codes. In \cite{Massey1992}, it was shown that LCD codes provide an optimum linear coding solution for binary adder channel. In \cite{Yang}, Yang and Massey have given a necessary and sufficient condition for a cyclic code to have a complementary dual. The LCD condition for a certain class of quasi-cyclic codes has been studied in \cite{Esmaeilis2009}. In \cite{Carlet2016}, Dougherty gave a linear programming bound on the largest size of an LCD code of given length and minimum distance. In \cite{Li2018}, Li constructed some non-MDS cyclic Hermitian LCD codes over a finite field and analyzed their parameters. In \cite{Pang2018}, a class of MDS negacyclic LCD codes of even length was given. Carlet and Guilley studied an application of LCD codes against side-channel attacks and presented particular constructions for LCD codes in \cite{Carlet2016}. In \cite{Beelen2018}, Beelen and Jin gave an explicit construction of several classes of LCD MDS codes over a finite field $ \mathbb{F}_q $ which were completely solved for even $ q $ in \cite{Jin2017}. In \cite{LiLi2017}, authors explored two special families of cyclic codes, which are both BCH codes. In \cite{Chen2018}, Chen and Liu proposed a different approach to obtain new LCD MDS codes from generalized Reed-Solomon codes. In \cite{Sok2017}, authors proved the existence of optimal LCD codes over a large finite field, and they also gave methods to generate orthogonal matrices over a finite field and then use them to construct LCD codes. In \cite{Carlet2018}, Carlet studied several constructions of new Euclidean and Hermitian LCD MDS codes. There are few articles on group codes over finite fields. With a special criterion, the authors \cite{Bernal2009} classified the Cauchy codes which are left group codes. In \cite{Borello2020} and \cite{delacruz2018}, the authors studied LCD group codes and provided a necessary and sufficient condition for an LCD group code and concluded with an open problem.     

The rest of the paper is organized as follows. In Section \ref{Sec2}, a characterization of binary LCD group codes is given. Binary LCD properties of group codes for abelian groups are discussed in Section \ref{Sec3}. In Section \ref{Sec4}, we count the number of LCD cyclic group codes. We have shown the dimension and the minimum distance of a group code in Section \ref{Sec5}. A condition is investigated for MDS group codes in Section \ref{Sec6}. Some suitable examples are given in Section \ref{Sec7}. We conclude the paper in Section \ref{Sec8}.

\section{Characterization of LCD group codes}\label{Sec2}
Throughout this section, we consider $\mathcal{G} $ is any finite group. The main contribution of this section is that $M$ does not contain $2$ order element. 

\begin{lemma}\label{Paper1_lemma3.1}
A group code $\mathcal{C}$ is LCD if and only if $\Phi (\mathcal{C}) $ is LCD over $\mathbb{F}_2$.
\end{lemma}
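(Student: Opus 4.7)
The plan is to exploit the fact that $\Phi$ is specifically designed to be a structure-preserving bijection between $\mathbb{F}_2\mathcal{G}$ and $\mathbb{F}_2^n$: it is an $\mathbb{F}_2$-linear isomorphism, and the inner product on $\mathbb{F}_2\mathcal{G}$ was \emph{defined} by pulling back the standard inner product on $\mathbb{F}_2^n$ through $\Phi$, i.e.\ $\langle a,b\rangle=\langle \Phi(a),\Phi(b)\rangle$. So the whole content of the lemma is really that $\Phi$ commutes with taking the dual and with intersections, from which the equivalence pops out immediately.

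First, I would verify the identity $\Phi(\mathcal{C}^\perp)=\Phi(\mathcal{C})^\perp$. For the inclusion $\subseteq$, take $a\in\mathcal{C}^\perp$; then for every $c\in\mathcal{C}$ one has $\langle\Phi(a),\Phi(c)\rangle=\langle a,c\rangle=0$, and since $\Phi$ is surjective onto $\mathbb{F}_2^n$ this means $\Phi(a)\in\Phi(\mathcal{C})^\perp$. The reverse inclusion is symmetric, using that every vector of $\mathbb{F}_2^n$ is $\Phi(c)$ for a unique $c\in\mathbb{F}_2\mathcal{G}$.

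Next, since $\Phi$ is an $\mathbb{F}_2$-linear bijection, it commutes with intersection: $\Phi(\mathcal{C}\cap\mathcal{C}^\perp)=\Phi(\mathcal{C})\cap\Phi(\mathcal{C}^\perp)=\Phi(\mathcal{C})\cap\Phi(\mathcal{C})^\perp$, where the last equality uses the previous step. Because $\Phi$ is injective and sends $0$ to $0$, the left-hand side is $\{0\}$ iff $\mathcal{C}\cap\mathcal{C}^\perp=\{0\}$, while the right-hand side is $\{0\}$ iff $\Phi(\mathcal{C})$ is LCD. Chaining these equivalences concludes the proof.

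There is no real obstacle here: the statement is essentially a sanity check that the transport of structure via $\Phi$ is faithful, and the only thing one must be a little careful about is not conflating the ideal structure of $\mathcal{C}$ (which plays no role for this lemma) with its $\mathbb{F}_2$-linear subspace structure (which is all $\Phi$ sees). Once that distinction is kept in mind, the argument is a one-line unpacking of the definitions.
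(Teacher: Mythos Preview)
Your proof is correct and follows essentially the same route as the paper: establish $\Phi(\mathcal{C}^\perp)=\Phi(\mathcal{C})^\perp$ from the definition of the inner product, use that the bijection $\Phi$ commutes with intersection, and conclude. The paper's version is more terse, but the underlying argument is identical.
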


\begin{proof}
From the definition of dot product of the group code, we get $\Phi(\mathcal{C})^\perp  = \Phi(\mathcal{C}^{\perp})$. Then $x\in \Phi(\mathcal{C}) \cap \Phi ( \mathcal{C})^\perp $ if and only if $ x \in \Phi(\mathcal{C} \cap \mathcal{C}^\perp)$. This completes the proof.
\end{proof}

We get the following theorem due to Massey \cite{Massey1992}.
\begin{theorem}\label{Paper1_Theorem3.1}
Let $G$ be a generator matrix of a code $C$ over $\mathbb{F}_2$. Then $ C $ is LCD if and only if $ G G^T $ is invertible.
\end{theorem}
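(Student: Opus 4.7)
The plan is to exploit the basic fact that if $G$ is a generator matrix of $C$, then $c \in C^\perp$ precisely when $cG^T = 0$, because the rows of $G$ span $C$. Since every codeword has the form $c = vG$ for a row vector $v$, and this expression is unique as $G$ has full row rank $k$, both implications of the equivalence can be reduced to tracking when $vGG^T$ vanishes.

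For the ``only if'' direction, I would assume $C$ is LCD and argue that $GG^T$ has trivial left null space, which for a square matrix over $\mathbb{F}_2$ is equivalent to invertibility. Given any $v$ with $vGG^T = 0$, set $c := vG \in C$. Then $cG^T = 0$, so $c$ is orthogonal to every row of $G$, hence $c \in C^\perp$. The LCD hypothesis $C \cap C^\perp = \{0\}$ forces $c = 0$, and the linear independence of the rows of $G$ then forces $v = 0$.

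For the ``if'' direction, I would assume $GG^T$ is invertible and pick any $c \in C \cap C^\perp$. Writing $c = vG$, the condition $c \in C^\perp$ again becomes $vGG^T = 0$, and invertibility immediately gives $v = 0$, so $c = 0$. Thus $C \cap C^\perp = \{0\}$, i.e., $C$ is LCD.

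There is no serious obstacle: the argument is a short piece of linear algebra once the characterization $C^\perp = \{c : cG^T = 0\}$ is in hand. The only point that deserves explicit mention is that $G$ has linearly independent rows, which is what lets us pass between $vG = 0$ and $v = 0$ in both directions; without this, the forward implication would only give $vG = 0$ rather than the desired $v = 0$.
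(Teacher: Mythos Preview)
Your argument is correct and is essentially the classical proof of this fact. Note, however, that the paper does not supply its own proof of this theorem: it is quoted as a known result due to Massey \cite{Massey1992} and stated without proof, so there is no ``paper's proof'' to compare against. Your write-up would serve perfectly well as a self-contained justification.
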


\begin{theorem}
Let $\mathcal{C}$ be a group code with a free basis $ B = \{b_1, b_2, \dots, b_k\}$. Then $\mathcal{C}$ is an LCD group code if and only if $P$ is invertible, where the matrix $$P =  \begin{bmatrix}
\langle b_1, b_1 \rangle & \langle b_1, b_2\rangle   & \dots  & \langle b_1, b_k\rangle \\
\langle b_2, b_1\rangle  & \langle b_2, b_2\rangle   & \dots  & \langle b_2, b_k\rangle \\
 \vdots                  &    \vdots                 & \vdots & \vdots  \\
\langle b_k, b_1\rangle  & \langle b_k, b_2\rangle   & \dots  &  \langle b_k, b_k\rangle 
\end{bmatrix}.$$ 
\end{theorem}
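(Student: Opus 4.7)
The plan is to reduce the statement to Massey's criterion (Theorem \ref{Paper1_Theorem3.1}) via the isomorphism $\Phi$ from equation \eqref{paper1_equn1}. Since $B = \{b_1, \dots, b_k\}$ is a free basis of $\mathcal{C}$, the images $\Phi(b_1), \dots, \Phi(b_k)$ form an $\mathbb{F}_2$-basis of $\Phi(\mathcal{C}) \subseteq \mathbb{F}_2^n$. So I would first form the $k \times n$ matrix $G$ whose $i$-th row is $\Phi(b_i)$; by construction $G$ is a generator matrix of $\Phi(\mathcal{C})$.

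Next I would identify $P$ with $GG^T$. By definition of the inner product on $\mathbb{F}_2\mathcal{G}$ given in the introduction, $\langle b_i, b_j\rangle = \langle \Phi(b_i), \Phi(b_j)\rangle$, and the latter is precisely the $(i,j)$-entry of $GG^T$. Hence $P = GG^T$ as $k \times k$ matrices over $\mathbb{F}_2$.

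To finish, I would chain the two equivalences already available. By Lemma \ref{Paper1_lemma3.1}, $\mathcal{C}$ is LCD if and only if $\Phi(\mathcal{C})$ is LCD over $\mathbb{F}_2$. By Theorem \ref{Paper1_Theorem3.1} applied to the generator matrix $G$, $\Phi(\mathcal{C})$ is LCD if and only if $GG^T = P$ is invertible. Combining these two yields the claimed equivalence.

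There is essentially no hard step here: the only thing to be careful about is to record that $\Phi$ preserves inner products (which is built into the way $\langle \cdot, \cdot\rangle$ was defined on $\mathbb{F}_2\mathcal{G}$) and that a free basis of the right ideal $\mathcal{C}$ maps under $\Phi$ to a basis of the associated linear code, so that $G$ is genuinely a generator matrix. Once those two bookkeeping points are in place, the proof is a one-line application of Lemma \ref{Paper1_lemma3.1} followed by Theorem \ref{Paper1_Theorem3.1}.
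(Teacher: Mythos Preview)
Your proposal is correct and follows essentially the same route as the paper: form the generator matrix $G$ of $\Phi(\mathcal{C})$ from the rows $\Phi(b_i)$, observe $P = GG^T$ via the defining identity $\langle b_i,b_j\rangle = \langle \Phi(b_i),\Phi(b_j)\rangle$, and then combine Lemma~\ref{Paper1_lemma3.1} with Theorem~\ref{Paper1_Theorem3.1}. There is no substantive difference between your argument and the paper's.
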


\begin{proof}
We consider the isomorphism $\Phi : \mathbb{F}_2\mathcal{G} \rightarrow \mathbb{F}_2 ^n $. Then $\Phi (\mathcal{C}) $ is a linear code with a basis $\Phi (B) = \left\{\Phi (b_1), \Phi (b_2), \dots, \Phi (b_k) \right\} $. Then a generator matrix of the code $\Phi(\mathcal{C})$ is \[ G = \begin{bmatrix} \Phi (b_1) \\ \Phi (b_2) \\ \vdots \\ \Phi (b_k) \end{bmatrix}. \] 
Then \[ G G^T =  \begin{bmatrix}
\langle \Phi (b_1),\Phi (b_1) \rangle  & \langle \Phi (b_1), \Phi (b_2) \rangle  & \cdots   & \langle \Phi (b_1),\Phi (b_k) \rangle \\
\langle \Phi (b_2),\Phi (b_1)\rangle   & \langle \Phi (b_2),\Phi (b_2)\rangle    & \cdots   & \langle \Phi (b_2),\Phi (b_k) \rangle \\
 \vdots                  &            \vdots         & \vdots   &  \vdots     \\
\langle \Phi (b_k),\Phi (b_1)\rangle  & \langle \Phi (b_k), \Phi (b_2) \rangle    & \dots    &  \langle \Phi (b_k),\Phi (b_k) \rangle 
\end{bmatrix}. \]  Since  $\langle b_i, b_j\rangle =\langle \Phi (b_i), \Phi (b_j) \rangle $, so $P = G G^T$. From Lemma \ref{Paper1_lemma3.1}, $\Phi (\mathcal{C}) $ is LCD. By the Theorem \ref{Paper1_Theorem3.1}, $\Phi (\mathcal{C}) $ is LCD if and only if $P$ is invertible. Hence $\mathcal{C}$ is an LCD group code if and only if $P$ is invertible.
\end{proof}

We get the following theorem from De la Cruz and Willems.
\begin{theorem}\label{Paper1_Theorem3.4}\cite[Theorem 3.1]{delacruz2018}
If $ \mathcal{C} $ is a right ideal in $ \mathbb{F}_2 \mathcal{G} $, then the following are equivalent:
\begin{enumerate}
\item $\mathcal{C}$ is an LCD group code,
\item $ \mathcal{C} = e\mathbb{F}_2  \mathcal{G}  $, where $ e^2 = e = \hat{e} $.
\end{enumerate}
\end{theorem}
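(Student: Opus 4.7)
The plan is to prove the two implications separately, leaning on the way the group-algebra inner product interacts with multiplication through the adjoint. The key identities I would set up at the start are $\langle ab, c\rangle = \langle a, c\hat{b}\rangle$ and $\langle a, bc\rangle = \langle \hat{b}a, c\rangle$, each of which follows from a routine reindexing of the defining sum $\sum_{g,h} a_h b_{h^{-1}g} c_g$. These two identities, together with non-degeneracy of the standard inner product on $\mathbb{F}_2\mathcal{G}$, drive everything.

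For $(2)\Rightarrow(1)$, suppose $e^2=e=\hat{e}$ and $\mathcal{C}=e\mathbb{F}_2\mathcal{G}$. Take $x\in \mathcal{C}\cap\mathcal{C}^\perp$, so $x=ea$ for some $a$ and $\langle x, eb\rangle = 0$ for every $b\in\mathbb{F}_2\mathcal{G}$. Using the second identity and the hypotheses on $e$, $\langle x, eb\rangle = \langle \hat{e}x, b\rangle = \langle ex, b\rangle = \langle e^2 a, b\rangle = \langle ea, b\rangle = \langle x, b\rangle$. Vanishing for all $b$ forces $x=0$.

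For $(1)\Rightarrow(2)$ I would proceed in three moves. First, $\mathcal{C}^\perp$ is itself a right ideal: for $a\in\mathcal{C}^\perp$, $r\in\mathbb{F}_2\mathcal{G}$, and $c\in\mathcal{C}$, the first identity gives $\langle ar, c\rangle = \langle a, c\hat{r}\rangle = 0$ since $c\hat{r}\in\mathcal{C}$. Second, the LCD hypothesis plus the dimension count $\dim\mathcal{C}+\dim\mathcal{C}^\perp=n$ yields $\mathbb{F}_2\mathcal{G}=\mathcal{C}\oplus\mathcal{C}^\perp$, so I can write $g_0 = e+f$ with $e\in\mathcal{C}$ and $f\in\mathcal{C}^\perp$; then for any $x\in\mathcal{C}$ the decomposition $x=ex+fx$ with $ex\in\mathcal{C}$ and $fx\in\mathcal{C}^\perp$ forces $ex=x$ and $fx=0$, which simultaneously gives $e^2=e$ and $\mathcal{C}=e\mathbb{F}_2\mathcal{G}$.

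The main obstacle is the third move, namely showing $\hat{e}=e$. Here I would use that left multiplication $\pi(a)=ea$ is exactly the projection of $\mathbb{F}_2\mathcal{G}$ onto $\mathcal{C}$ parallel to $\mathcal{C}^\perp$, and since $\mathcal{C}^\perp$ is the orthogonal complement of $\mathcal{C}$, this projection is self-adjoint with respect to $\langle\cdot,\cdot\rangle$. Concretely, for all $a,b$, $\langle ea, b\rangle = \langle \pi(a), b\rangle = \langle a, \pi(b)\rangle = \langle a, eb\rangle = \langle \hat{e}a, b\rangle$, and non-degeneracy yields $e=\hat{e}$. An equivalent bookkeeping, which I would mention as a sanity check, is that the matrix of $\pi$ in the group-element basis has $(g,k)$-entry $e_{gk^{-1}}$, whose symmetry is exactly the condition $e_u=e_{u^{-1}}$ for all $u$. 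This self-adjointness step is the one place where the LCD hypothesis is used in full strength beyond merely producing an idempotent generator.
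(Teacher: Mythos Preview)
The paper does not give its own proof of this theorem; it is quoted verbatim from \cite[Theorem~3.1]{delacruz2018} and used as a black box. Your argument is correct and complete: the adjunction identities $\langle ab,c\rangle=\langle a,c\hat b\rangle$ and $\langle a,bc\rangle=\langle \hat b a,c\rangle$ are the right tools, the $(2)\Rightarrow(1)$ direction is immediate from them, and in the $(1)\Rightarrow(2)$ direction your three moves---$\mathcal{C}^\perp$ is a right ideal, the decomposition $g_0=e+f$ makes $e$ a left identity and hence an idempotent generator, and the orthogonal projection $a\mapsto ea$ is self-adjoint so $e=\hat e$---are exactly the standard route and match the approach in the cited source. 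There is nothing to correct.
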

From the Theorem \ref{Paper1_Theorem3.4}, we see that $ \mathcal{C}^\perp = (1-e)\mathbb{F}_2  \mathcal{G}  $.

\begin{theorem}\label{Paper1_Theorem3.5}
Suppose $ \mathcal{C} = e \mathbb{F}_2  \mathcal{G}  $ is an LCD group code. Then the component set $M$ of $e$ does not contain $ 2 $ order element.
\end{theorem}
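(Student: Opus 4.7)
The plan is to exploit the two consequences of Theorem \ref{Paper1_Theorem3.4}, namely $e^2=e$ and $\hat e = e$, by comparing coefficients in the two sides of the idempotency equation at a putative order-$2$ component.

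First I would record the two structural facts about $M$. Writing $e=\sum_{g\in M} g$, self-adjointness $\hat e = e$ translates to $M = M^{-1}$, i.e.\ the component set is closed under inversion. This will be the only consequence of self-adjointness that I use; the rest of the argument rests on $e^2 = e$.

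Now suppose, toward a contradiction, that some element $g\in M$ has order $2$, so $g^{-1}=g$ and $g\neq g_0$. I would compare the coefficient of $g$ on both sides of $e^2=e$. On the right it is $1$, since $g\in M$. On the left it equals, modulo $2$, the cardinality
\[
N_g \;=\; \bigl|\{(h,k)\in M\times M : hk=g\}\bigr|.
\]
The goal is to show $N_g$ is even, contradicting the fact that the coefficient must be $1$.

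The key trick is to introduce the involution
\[
\sigma : (h,k)\longmapsto (k^{-1},h^{-1})
\]
on the set $S_g=\{(h,k)\in M\times M : hk=g\}$. Because $M=M^{-1}$ we have $\sigma(S_g)\subseteq M\times M$, and $(k^{-1})(h^{-1})=(hk)^{-1}=g^{-1}=g$ (using order $2$), so $\sigma$ maps $S_g$ into itself; clearly $\sigma^2=\mathrm{id}$. A fixed point of $\sigma$ would force $h=k^{-1}$, hence $hk=g_0$, but this contradicts $hk=g\neq g_0$. Therefore $\sigma$ is a fixed-point-free involution on $S_g$, so $|S_g|=N_g$ is even, giving the desired contradiction and proving no order-$2$ element lies in $M$.

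I do not expect any serious obstacle: the only subtle point is checking that the pairing $\sigma$ is well-defined as a self-map of $S_g$, which needs exactly the two hypotheses $M=M^{-1}$ and $g=g^{-1}$, both of which are in hand. The main conceptual content is the observation that $e^2=e$ must be tested component-wise and that an involution on the convolution support gives parity information.
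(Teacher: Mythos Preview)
Your proof is correct and follows essentially the same approach as the paper: both arguments compare the coefficient of a putative order-$2$ element $g$ on the two sides of $e^2=e$ and use self-adjointness $M=M^{-1}$ to pair each contributing product $(h,k)$ with $(k^{-1},h^{-1})$, noting that the pairing is fixed-point free because $g\neq g_0$. Your explicit packaging of this as a single involution $\sigma$ on $S_g$ is cleaner than the paper's case split into diagonal ($h=k$) and off-diagonal contributions, but the underlying mechanism is identical.
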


\begin{proof}
Let $  M = \{g_1, g_2, \dots, g_l \}  $ be the component set of $ e $. Since $ \mathcal{C} $ is an LCD  group code, so $ e^2 = e = \hat{e} $. Suppose that $ M $ contains  an element of order $ 2 $. So there exists a non-identity element $ g_r \in M $ such that $ g^2_r = g_0 $.

\textbf{Case I:} Suppose  $ g^2_1 = g_r$ which implies $ g^4_1 = g_0 $. This shows that the order of $ g_1 $ is $ 4 $. Since $  e = \hat{e} $, so  $ g_1 , g^3_1 \in M $. So, we have $ (g^3_1)^2 + (g^{-3}_1)^2 = 0 $. This shows that $ g^2_1\neq g_r$.

\textbf{Case II:} Again, without loss of generality, suppose $g_1 g_2 = g_r $. Since $  e = \hat{e} $, so $ g^{-1}_1,  g^{-1}_2 \in M $. Now $ (g_1 g_2)^{-1} = g^{-1}_r = g_r $. From above we get $ g_1 g_2 + (g_1 g_2)^{-1} = 0 $. This shows that $g_1 g_2 \neq g_r $. 

In both cases $g_r \notin M $. These arguments conclude that there does not have $2$ order element in  $M$.
\end{proof}
\begin{theorem}\label{Paper1_Theorem3.6}
Suppose $ M $ is a subgroup of $ \mathcal{G} $. If $2 \nmid |M|$, then $\mathcal{C}=e \mathbb{F}_2  \mathcal{G} $ forms an LCD group code in $\mathbb{F}_2\mathcal{G}$.
\end{theorem}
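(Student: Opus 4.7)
The plan is to use Theorem \ref{Paper1_Theorem3.4}: it suffices to show that $e = \sum_{g \in M} g$ satisfies both $e = \hat{e}$ and $e^2 = e$ in $\mathbb{F}_2\mathcal{G}$, from which $\mathcal{C} = e\mathbb{F}_2\mathcal{G}$ is LCD.

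First I would check that $e$ is self-adjoint. Since $M$ is a subgroup, the inversion map $g \mapsto g^{-1}$ sends $M$ bijectively onto itself, so
\[
\hat{e} = \sum_{g \in M} g^{-1} = \sum_{g \in M} g = e.
\]

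Next I would check idempotency. Expanding the product gives
\[
e^2 = \sum_{g \in M} \sum_{h \in M} gh = \sum_{k \in \mathcal{G}} N_k\, k, \qquad N_k = \bigl|\{(g,h) \in M \times M : gh = k\}\bigr|.
\]
Since $M$ is a subgroup, $gh \in M$ whenever $g,h \in M$, so $N_k = 0$ for $k \notin M$. For $k \in M$, every choice of $g \in M$ yields a unique $h = g^{-1}k \in M$ with $gh = k$, so $N_k = |M|$. Therefore
\[
e^2 = |M| \sum_{k \in M} k = |M|\, e \pmod 2.
\]
By hypothesis $|M|$ is odd, so $|M| \equiv 1 \pmod 2$, giving $e^2 = e$.

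The two properties $e^2 = e$ and $e = \hat{e}$ together with Theorem \ref{Paper1_Theorem3.4} then immediately yield that $\mathcal{C} = e\mathbb{F}_2\mathcal{G}$ is an LCD group code. There is no genuine obstacle here: the argument is a direct computation using closure of $M$ under the group operation and under inversion, combined with the parity hypothesis that collapses $|M| \cdot e$ to $e$ modulo $2$.
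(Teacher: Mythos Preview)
Your proof is correct and follows the same approach as the paper: establish $\hat e = e$ from the fact that inversion permutes the subgroup $M$, establish $e^2 = |M|\,e = e$ from closure of $M$ and the oddness of $|M|$, and then invoke Theorem~\ref{Paper1_Theorem3.4}. Your counting argument for $e^2$ is a bit more explicit than the paper's one-line computation, but the substance is identical.
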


\begin{proof}
 Since $2\nmid |M|$, each element of $M$ has an  odd order. Suppose  $ e = g_1 + g_2 + \dots + g_l$, so $M=\{ g_1, g_2, \dots, g_l\} $. Since $M$ forms a subgroup so $M = \widehat M $, where $\widehat M = \{ g^{-1}_1,g^{-1}_2, \dots, g^{-1}_l\} $. Hence $ e = \widehat e $. Since $2 \nmid |M|$ and $M$ forms a subgroup, so $ e^2 = |M|(g_1+g_2+\dots+g_l) = e $. Therefore we get $e = e^2 = \widehat e $. Hence $\mathcal{C}= e \mathbb{F}_2  \mathcal{G}  $ forms an LCD group code. 
\end{proof}

\begin{theorem}
Suppose $ M $ is a subgroup of $ \mathcal{G} $. If  $\mathcal{C}=e \mathbb{F}_2  \mathcal{G}  $ forms an LCD group code in $\mathbb{F}_2  \mathcal{G}  $, then $ 2 \nmid |M|$. 
\end{theorem}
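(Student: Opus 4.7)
The plan is to combine Theorem \ref{Paper1_Theorem3.5}, which forbids elements of order $2$ in the component set $M$, with Cauchy's theorem from finite group theory applied to the subgroup $M$.

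First I would recall that since $\mathcal{C}=e\mathbb{F}_2\mathcal{G}$ is assumed to be an LCD group code, Theorem \ref{Paper1_Theorem3.5} directly gives that $M$ contains no element of order $2$. Next, assume for contradiction that $2 \mid |M|$. Because $M$ is by hypothesis a subgroup of $\mathcal{G}$, it is itself a finite group of even order, so Cauchy's theorem produces an element $g \in M$ of order exactly $2$. This immediately contradicts Theorem \ref{Paper1_Theorem3.5}, and hence $2\nmid |M|$.

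The argument is essentially a one-line deduction once Theorem \ref{Paper1_Theorem3.5} is in place; the only subtlety worth spelling out is why the subgroup hypothesis is actually used. Without it, Theorem \ref{Paper1_Theorem3.5} alone only forbids order-$2$ elements in $M$, but $|M|$ could in principle still be even (one could have an even-sized set containing only elements of odd order). The subgroup assumption is exactly what lets us invoke Cauchy's theorem to convert the divisibility condition $2\mid |M|$ into the existence of an involution in $M$. So there is no real obstacle; the main point is to invoke the correct group-theoretic fact and apply the previous theorem.
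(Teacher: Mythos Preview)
Your proposal is correct and matches the paper's own proof exactly: the paper simply says the result follows from Cauchy's theorem together with Theorem \ref{Paper1_Theorem3.5}, which is precisely the argument you give. Your additional remark explaining why the subgroup hypothesis is needed is a nice clarification that the paper omits.
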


\begin{proof}
The proof is followed from Cauchy's theorem and Theorem \ref{Paper1_Theorem3.5}.
\end{proof}

\section{LCD group codes for abelian groups}\label{Sec3}
In this section, we consider $ \mathcal{G} $ is an abelian group of order $n$.

\begin{theorem}\label{Paper1_Theorem3.7}
Let $\mathcal{C}= e  \mathbb{F}_2  \mathcal{G} $ be an LCD group code. Then $ M $ has no element of order $2k$; $k \in \mathbb{N}$.
\end{theorem}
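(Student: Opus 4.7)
The plan is to exploit the fact that because $\mathcal{G}$ is abelian and we are in characteristic $2$, the Frobenius map $x \mapsto x^2$ is a ring endomorphism of $\mathbb{F}_2\mathcal{G}$. Indeed, for any $a = \sum a_g g \in \mathbb{F}_2\mathcal{G}$, the cross terms $2a_g a_h gh$ vanish and $a_g^2 = a_g$ in $\mathbb{F}_2$, giving $a^2 = \sum_g a_g g^2$. By Theorem \ref{Paper1_Theorem3.4}, the LCD hypothesis forces $e^2 = e$, and iterating Frobenius yields $e = e^{2^j}$ for every $j \geq 0$, i.e.
\[
\sum_{g \in M} g \;=\; \sum_{g \in M} g^{2^j} \qquad \text{for all } j \geq 0.
\]

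Next I would choose $j$ large enough to erase the $2$-part of every element's order. Writing $|\mathcal{G}| = 2^a m$ with $m$ odd, Lagrange gives $g^{2^a m} = g_0$ for every $g \in \mathcal{G}$, so $(g^{2^a})^m = g_0$ and therefore $g^{2^a}$ has order dividing the odd integer $m$. Taking $j = a$ in the displayed identity,
\[
e \;=\; \sum_{g \in M} g \;=\; \sum_{g \in M} g^{2^a},
\]
so the right-hand side is supported on the set $\mathcal{G}_{2'}$ of odd-order elements of $\mathcal{G}$. Since two elements of $\mathbb{F}_2\mathcal{G}$ agree iff their supports and coefficients match, the support of the left-hand side---which is precisely $M$---must be contained in $\mathcal{G}_{2'}$. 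Hence every element of $M$ has odd order, and in particular $M$ contains no element of order $2k$ for any $k \in \mathbb{N}$.

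I do not foresee a serious obstacle; the content is simply that the LCD idempotent condition combined with iterated Frobenius confines the support of $e$ to the $2'$-part of $\mathcal{G}$. The one point that deserves a sentence of care is that cancellations on the right-hand side could only shrink its support inside $\mathcal{G}_{2'}$, so they cannot produce any even-order element, and the containment $M \subseteq \mathcal{G}_{2'}$ is genuinely forced. Note that this approach simultaneously strengthens and cleanly subsumes Theorem \ref{Paper1_Theorem3.5} in the abelian setting, and it also recovers one direction of Theorem \ref{Paper1_Theorem3.6} via the same idempotent identity.
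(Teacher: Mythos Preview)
Your argument is correct. Both your proof and the paper's hinge on the same Frobenius observation that in an abelian group algebra over $\mathbb{F}_2$ one has $e^{2^j}=\sum_{g\in M}g^{2^j}$, together with the idempotent condition $e^{2^j}=e$. The difference is in the finishing move: the paper deduces from $e=e^2$ that $M=M^2$ as sets (using that the support of $e^2$ lies in $\{g^2:g\in M\}$ and a cardinality count), so squaring permutes $M$; finiteness then gives some $l$ with $g^{2^l}=g$ for every $g\in M$, whence each order divides the odd number $2^l-1$. You instead choose a single explicit exponent $j=a$ with $2^a\,\|\,|\mathcal{G}|$, so that $g\mapsto g^{2^a}$ already lands in the odd-order part $\mathcal{G}_{2'}$, and conclude $M=\mathrm{supp}(e)\subseteq \mathcal{G}_{2'}$ directly. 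Your route avoids the (slightly implicit) step $M=M^2$ and the bijection argument, at the cost of invoking Lagrange to pick the exponent; the paper's route is exponent-free but needs the set-equality justification. One small remark: your closing sentence about ``recovering one direction of Theorem \ref{Paper1_Theorem3.6}'' is off---that theorem is the implication \emph{towards} LCD, not from it, so your argument does not touch it.
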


\begin{proof}
Since $ \mathcal{G} $ is commutative, so $M = M^2 = \widehat{M} $, where $M^2$ contains only square elements of $M$. Let us consider there exists an element $g\in M $ such that the order of $g$ is $2k$, for some $k\in \mathbb{N}$. Since  $M = M^2 = \widehat{M} $, so $ g ^{2l} = g$, for some $l\in \mathbb{N}$. This shows that $2k \mid 2l-1$, which is impossible. Hence $ M $ has no element of order $2k$; $k\in\mathbb{N}$.
\end{proof}

Let $C$ be a binary cyclic code of length $ 2 ^a $, $ a\in\mathbb{N}$ with the monic generator polynomial $ g(x)$. Then $ g(x)\mid (x^{2^a} -1) $. Since $(x^{2^a} -1) = (x - 1 )^{2^a} $ in $ \mathbb{F}_2[x] $, we get $ g(x) = (x-1) ^i $, $ 0 \leq i \leq 2^a $. Now consider a cyclic group $  \mathcal{G}  = \langle g \rangle $. 

Now, we define a mapping 
\begin{equation}\label{paper1_equn2}
\mathbb{F}_2  \mathcal{G} \ni g \xmapsto{\;\;\Phi_{1}\;\;} x \in \mathbb{F}_2 [x]/ \langle  x ^{2^a} - 1 \rangle,
\end{equation}
which is a module isomorphism.

\begin{theorem}
Every subgroup of a cyclic group $ \mathcal{G} $ of order $ 2^a $, there exists a bijective correspondence between group code $e \mathbb{F}_2  \mathcal{G}  $ and  a cyclic code of length $ 2^a $ over $ \mathbb{F}_2 $.
\end{theorem}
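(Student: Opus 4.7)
The content of this statement is a transport-of-structure along $\Phi_{1}$, so I would begin by upgrading $\Phi_{1}$ from a module isomorphism to a ring (in fact $\mathbb{F}_{2}$-algebra) isomorphism. Writing $\mathcal{G}=\langle g\rangle$ with $g^{2^{a}}=g_{0}$, the rule $g^{i}\mapsto x^{i}$ is well-defined because $x^{2^{a}}=1$ in $\mathbb{F}_{2}[x]/\langle x^{2^{a}}-1\rangle$, additive by construction, and multiplicative on basis elements via $g^{i}g^{j}=g^{i+j}\mapsto x^{i+j}=x^{i}x^{j}$. A dimension count gives injectivity and surjectivity, so $\Phi_{1}$ is an $\mathbb{F}_{2}$-algebra isomorphism between $\mathbb{F}_{2}\mathcal{G}$ and $\mathbb{F}_{2}[x]/\langle x^{2^{a}}-1\rangle$.

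Next I would use this isomorphism to match ideals with ideals. A group code is, by the definition recalled in Section~1, a right ideal of $\mathbb{F}_{2}\mathcal{G}$, and commutativity of $\mathcal{G}$ makes every such right ideal two-sided. Under $\Phi_{1}$, the image of $e\mathbb{F}_{2}\mathcal{G}$ is $\Phi_{1}(e)\cdot\bigl(\mathbb{F}_{2}[x]/\langle x^{2^{a}}-1\rangle\bigr)$, which is a principal ideal in $\mathbb{F}_{2}[x]/\langle x^{2^{a}}-1\rangle$ and therefore a cyclic code of length $2^{a}$ over $\mathbb{F}_{2}$. Running the same argument through $\Phi_{1}^{-1}$ produces the inverse assignment, and since both directions preserve inclusions we obtain the desired bijective correspondence.

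To make the bijection fully explicit, I would invoke the factorization $x^{2^{a}}-1=(x-1)^{2^{a}}$ in $\mathbb{F}_{2}[x]$, which shows that the ideals of $\mathbb{F}_{2}[x]/\langle x^{2^{a}}-1\rangle$ are exactly $\langle(x-1)^{i}\rangle$ for $0\le i\le 2^{a}$; pulling back along $\Phi_{1}^{-1}$ yields the matching chain of group codes $e\mathbb{F}_{2}\mathcal{G}$ in $\mathbb{F}_{2}\mathcal{G}$. The only real obstacle is the verification that $\Phi_{1}$ preserves multiplication (the excerpt only asserts module-linearity); once that is in hand the remainder is a routine ideal-correspondence argument and requires no further hypothesis on $e$.
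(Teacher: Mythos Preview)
Your argument is correct, but it takes a genuinely different route from the paper and in doing so ignores the clause ``every subgroup'' in the statement. The paper's proof is not a structural ideal-correspondence; it is concrete and subgroup-indexed. For the subgroup $\langle g^{2^{i}}\rangle$ of $\mathcal{G}$ the paper sets $e=g_{0}+g^{2^{i}}$, applies $\Phi_{1}$ to obtain $\Phi_{1}(e)=1+x^{2^{i}}=(1+x)^{2^{i}}$, and then simply notes that $\langle (1+x)^{2^{i}}\rangle$ is a cyclic code of length $2^{a}$ corresponding to the group code $e\,\mathbb{F}_{2}\mathcal{G}$. Thus the paper attaches a specific $e$ to each subgroup and checks by hand that its image is a legitimate generator polynomial.

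Your approach instead upgrades $\Phi_{1}$ to an $\mathbb{F}_{2}$-algebra isomorphism and invokes the ideal correspondence to produce a bijection between \emph{all} group codes in $\mathbb{F}_{2}\mathcal{G}$ and \emph{all} cyclic codes of length $2^{a}$. This is mathematically stronger and more transparent, and it actually supplies the multiplicativity of $\Phi_{1}$ that the paper only asserts as a module isomorphism. What it loses is the explicit link between a given subgroup and its associated $e$, which is the content the paper is really after here (and which feeds directly into the next theorem, where $e=g_{0}+g^{i}$ is used to rule out nontrivial cyclic LCD codes of length $2^{a}$). If you want to match the paper's intent, you should add one line identifying, for each subgroup $\langle g^{2^{i}}\rangle$, the particular element $e=g_{0}+g^{2^{i}}$ and its image $(1+x)^{2^{i}}$ under your isomorphism.
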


\begin{proof}
Let $ e = g_0 + g^{2^i} $. From Equation (\ref{paper1_equn2}), we get $\Phi_{1}(e) = 1+ x^{2^i} = g(x) $. Then $e \mathbb{F}_2  \mathcal{G}  $ forms a group code. Hence  $C = \langle  g(x) \rangle  $ is a linear code over $ \mathbb{F}_2 $ of length $ 2^a $ which  correspondences to $\mathcal{C} = e \mathbb{F}_2  \mathcal{G}  $ in $\mathbb{F}_2  \mathcal{G}  $.
\end{proof}

\begin{theorem}
For all  $a\in \mathbb{N}$, there does not exist nontrivial cyclic LCD code over $ \mathbb{F}_2 $ of length $ 2^a $.
\end{theorem}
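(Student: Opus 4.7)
The plan is to combine the bijective correspondence just established (cyclic codes of length $2^a$ versus idempotent-generated group codes $e\mathbb{F}_2\mathcal{G}$ for $\mathcal{G}=\langle g\rangle$ of order $2^a$) with the parity constraint on the component set $M$ provided by Theorem \ref{Paper1_Theorem3.7}.

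First I would fix $\mathcal{G}=\langle g\rangle$ of order $2^a$ with $a\geq 1$ (the case $a=0$ is vacuous, as the only code of length one is trivial) and, via $\Phi_{1}$ from Equation (\ref{paper1_equn2}), identify a cyclic code of length $2^a$ over $\mathbb{F}_2$ with a right ideal of $\mathbb{F}_2\mathcal{G}$. Assuming for contradiction that a nontrivial cyclic LCD code $\mathcal{C}$ of length $2^a$ exists, Theorem \ref{Paper1_Theorem3.4} lets me write $\mathcal{C}=e\mathbb{F}_2\mathcal{G}$ with $e=e^2=\hat{e}$ and $e\notin\{0,g_0\}$; in particular the component set $M$ of $e$ must contain at least one non-identity element of $\mathcal{G}$.

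Next, since $\mathcal{G}$ is cyclic and hence abelian, Theorem \ref{Paper1_Theorem3.7} applies and forbids $M$ from containing any element of order $2k$, $k\in\mathbb{N}$. On the other hand every non-identity element of a cyclic group of order $2^a$ has order $2^j$ for some $1\leq j\leq a$, and therefore has even order. These two statements together force $M\subseteq\{g_0\}$, so $e\in\{0,g_0\}$, contradicting the nontriviality of $\mathcal{C}$.

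The argument carries no substantive obstacle: the result is an immediate synthesis of the Cruz--Willems idempotent characterization with the parity-of-order restriction, exploiting the special feature of $2$-groups that every non-identity element has even order. The only point requiring care is to check that the abelian hypothesis of Theorem \ref{Paper1_Theorem3.7} is met, which is automatic from cyclicity. A completely alternative route would appeal to the Yang--Massey criterion together with the factorization $x^{2^a}-1=(x-1)^{2^a}$ in $\mathbb{F}_2[x]$, noting that the generator $g(x)=(x-1)^i$ and check polynomial $h(x)=(x-1)^{2^a-i}$ share a nontrivial factor whenever $0<i<2^a$; but the idempotent approach fits the framework of the paper more cleanly.
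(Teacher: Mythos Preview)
Your proposal is correct and proceeds by a different---and in fact cleaner---route than the paper's own argument. The paper works directly with the generator polynomial: it writes $g(x)=(x-1)^i$ for some $1\le i\le 2^a-1$, passes back through $\Phi_1^{-1}$ to identify a putative idempotent $e=g_0+g^{i}$, and then squares to obtain $e^2=g_0+g^{2i}$, so that $e^2=e$ forces $g^{i}=g_0$, contradicting nontriviality. Your argument instead invokes Theorem~\ref{Paper1_Theorem3.7} to rule out any element of even order in the component set $M$ of a self-adjoint idempotent, and then observes that every non-identity element of a cyclic $2$-group has even order, whence $M\subseteq\{g_0\}$ and $e$ is trivial. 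This sidesteps the paper's somewhat delicate (and arguably imprecise) identification of the idempotent with the image of the generator polynomial, and it leverages structural machinery the paper has already established. Your alternative Yang--Massey observation---that $g(x)=(x-1)^i$ and $h(x)=(x-1)^{2^a-i}$ share a nontrivial common factor whenever $0<i<2^a$---is likewise a valid, self-contained proof.
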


\begin{proof}
Let $C$ be a nontrivial cyclic LCD code over $ \mathbb{F}_2 $ of length $ 2^a $. Then there exists a generator polynomial $g(x)$ such that $ C = \langle  g(x) \rangle $, $ g(x)\mid ( x^{2^a} -1 )$, and $g(x) = (x-1) ^i $ for some $i$; $1 \leq i \leq 2^a -1 $.

From Equation (\ref{paper1_equn2}), $\Phi^{-1}_{1}(g(x)) = g_0 + g^i$. Since $ C $ is nontrivial so $ g^i \neq g_0 $. Then from Lemma \ref{Paper1_lemma3.1}, $\mathcal{C} = e \mathbb{F}_2  \mathcal{G}  $ is an LCD  group code such that $ e = g_0 + g^{i}$. Here $\Phi_{1} (e) = g (x) $. Since $ \mathcal{C} $ is LCD, so $ e^2 = e $. Now $ e^2 = e $ implies that $ g^i = g_0 $ which contradicts that  $ g^i \neq g_0 $.  
This completes the proof.
\end{proof}

\section{Number of LCD cyclic group codes}\label{Sec4}
Throughout this section, let us suppose that $g$ is a generator of a cyclic group $ \mathcal{G} $ of odd order $n$. We know that $ \mathcal{G}\cong \mathbb{Z}_n $. There is no nontrivial self inverse element in $ \mathcal{G}$ as $(2,n) = 1$. So each non-identity element in $ \mathcal{G} $ forms a distinct pair with its inverse. Suppose $\mathcal{S}  = \{C_{r_1}, C_{r_2}, \dots, C_{r_t} \}$ is the complete set of distinct cyclotomic cosets of $2$ modulo $n$. So $|\mathcal{S}| = t $. We make partitions of $ \mathcal{G} $ into $t$ parts such that $ \mathcal{G}  = \bigcup\limits_{i=1}^{t}P_{r_i}$, where $P_{r_i}=\{g^j:j\in\ C_{r_i}\}$. Let $\mathcal{C} = e \mathbb{F}_2 \mathcal{G} $ be an LCD  group code and $M$ be the component set of $e$. 

\begin{lemma}\label{Paper1_Lemma6.1}
If $P_i\cap M \neq \phi$, then $P_i\subseteq M$.
\end{lemma}
\begin{proof}
Suppose $h\in\ P_i\cap M$. Then there exists $l \in \mathbb{N}\cup\{0\}$ such that $ h = g^{i2^{l} \pmod{n}} \in M$. Then by the definition of $P_{i}$, we get  $g^{i2^{l+j} \pmod{n}} \in P_{i}$ for all $j\in\ \mathbb{N}\cup\{0\} $. Since $ e^2 = e $, so $ g^{i2^{l+j} \pmod{n}} \in M $. Hence the lemma is proved. 
\end{proof}

\begin{theorem}
Suppose $ \mathcal{G} $ is a cyclic group of odd order $n$ and $n\mid (2^j+1)$ for some $j\in \mathbb{N}$. Then the number of LCD  group codes in $\mathbb{F}_2 \mathcal{G} $ is $(2^t-1)$.
\end{theorem}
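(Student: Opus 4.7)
The plan is to construct a bijection between the LCD group codes in $\mathbb{F}_2\mathcal{G}$ and the nonempty subsets of $\{r_1,\dots,r_t\}$. For a subset $S \subseteq \{r_1,\dots,r_t\}$, set $M_S = \bigcup_{k \in S} P_{r_k}$ and $e_S = \sum_{g \in M_S} g$. I will show that $\mathcal{C}_S = e_S \mathbb{F}_2\mathcal{G}$ is an LCD group code for every nonempty $S$, that every LCD group code arises this way, and that distinct subsets give distinct codes.

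The first key observation is that the hypothesis $n \mid 2^j+1$ forces each block $P_{r_k}$ to be inversion-closed: since $2^j \equiv -1 \pmod n$, any $i \in C_{r_k}$ satisfies $-i \equiv i\cdot 2^j \pmod n \in C_{r_k}$. Hence $\widehat{M_S} = M_S$, so $\widehat{e_S} = e_S$. Next, using the Frobenius identity in characteristic $2$ together with the commutativity of $\mathcal{G}$, one gets $e_S^2 = \sum_{g \in M_S} g^2$. Each $P_{r_k}$ is closed under $g \mapsto g^2$ by the very definition of a $2$-cyclotomic coset, and since $n$ is odd the squaring map is a bijection on $\mathcal{G}$; therefore $\{g^2 : g \in M_S\} = M_S$ and $e_S^2 = e_S$. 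Theorem \ref{Paper1_Theorem3.4} then guarantees that $\mathcal{C}_S$ is an LCD group code.

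For the converse, Theorem \ref{Paper1_Theorem3.4} tells us every LCD group code has a generator $e$ with $e^2 = e = \widehat{e}$; Lemma \ref{Paper1_Lemma6.1} forces the component set of $e$ to be a union of blocks $P_{r_k}$, so $e = e_S$ for a unique subset $S$. Distinctness of the resulting codes follows because distinct idempotents in a commutative ring generate distinct principal ideals: from $e_1\mathbb{F}_2\mathcal{G} = e_2\mathbb{F}_2\mathcal{G}$ one deduces $e_1 e_2 = e_2$ and $e_2 e_1 = e_1$, and commutativity gives $e_1 = e_2$. Excluding the empty subset, which corresponds to $e=0$ and the trivial code $\{0\}$, leaves exactly $2^t - 1$ LCD group codes. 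I do not anticipate any serious obstacle here; the proof is essentially an assembly of Lemma \ref{Paper1_Lemma6.1}, Theorem \ref{Paper1_Theorem3.4}, and the two closure properties of the blocks $P_{r_k}$---squaring-closure being built into the definition of cyclotomic cosets, while inversion-closure is precisely what the hypothesis $n \mid 2^j+1$ supplies.
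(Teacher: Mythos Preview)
Your proposal is correct and follows essentially the same route as the paper's proof: both arguments use Lemma~\ref{Paper1_Lemma6.1} to force the component set $M$ of any self-adjoint idempotent to be a union of blocks $P_{r_k}$, observe that each block is closed under squaring by definition of cyclotomic cosets and closed under inversion because $n\mid 2^j+1$, and then count the nonempty subfamilies of $\{P_{r_1},\dots,P_{r_t}\}$. Your write-up is in fact slightly more complete than the paper's, since you make the injectivity of $S\mapsto \mathcal{C}_S$ explicit via the standard idempotent argument and you invoke Theorem~\ref{Paper1_Theorem3.4} cleanly in both directions.
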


\begin{proof}
From Lemma \ref{Paper1_Lemma6.1}, if $P_i\cap M \neq \phi$, $P_i\subseteq M$ for all $i\in [t]:=\{r_1, r_2, \dots, r_t\}$. We see that each $P_{i}$ is closed under square. Now, we claim that each $P_{i}$ is closed under inverse. Let $h \in P_{i}$. Then there exists $l \in \mathbb{N}\cup\{0\}$ such that $h = g^{i2^{l} \pmod{n}}$. Since $n \mid (2^j+1)$ for some $j \in \mathbb{N}$, so $2^j \equiv -1 \pmod{n}$. Then $g^{i2^{l + j} \pmod{ n}}$ is the inverse of $h$, which  belongs to $P_{i}$. So $P_{i}$ is closed under inverse. Therefore, we can take group elements of every nonempty  member of the power set $P(\mathcal{P})$ of $ \mathcal{P}=\{ P_{r_1},P_{r_2}, \dots, P_{r_t}\}$ as $M$. So we get total $(2^t-1)$ such $M$. Thus for each such $ M $, we can construct exactly one LCD  group code in $\mathbb{F}_2 \mathcal{G} $. Hence the total number of LCD group codes in $\mathbb{F}_2 \mathcal{G} $ is $(2^t-1)$.
\end{proof}

\begin{lemma}\label{Paper1_Lemma6.2}
$ 7\nmid (2^m+1)$ for all $m\in \mathbb{N}$.
\end{lemma}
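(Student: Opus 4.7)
The plan is to reduce the claim to a statement about the multiplicative order of $2$ modulo $7$. Saying that $7 \mid (2^m+1)$ is equivalent to saying $2^m \equiv -1 \equiv 6 \pmod{7}$, so it suffices to show that $6$ never appears in the sequence of powers of $2$ modulo $7$.

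First I would compute the first few powers directly: $2^1 \equiv 2$, $2^2 \equiv 4$, and $2^3 \equiv 8 \equiv 1 \pmod{7}$. This shows that the multiplicative order of $2$ in $(\mathbb{Z}/7\mathbb{Z})^\times$ is exactly $3$. Consequently, for every $m \in \mathbb{N}$, the residue $2^m \pmod 7$ depends only on $m \bmod 3$ and lies in the set $\{1,2,4\}$.

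Since $6 \notin \{1,2,4\}$, we conclude that $2^m \not\equiv -1 \pmod 7$ for any $m \in \mathbb{N}$, which is exactly the assertion $7 \nmid (2^m+1)$.

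There is no real obstacle here; the entire argument is the cyclic-subgroup computation above. The only subtlety worth flagging is the convention on $\mathbb{N}$: the statement is trivially true for $m=0$ as well since $2^0 + 1 = 2$, so the conclusion holds regardless of whether $0 \in \mathbb{N}$ in the paper's convention.
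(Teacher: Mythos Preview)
Your argument is correct and is essentially the same as the paper's: both reduce to the observation that $2^3\equiv 1\pmod 7$, so $2^m\bmod 7$ is determined by $m\bmod 3$ and takes only the values $\{1,2,4\}$, none of which equals $-1$. The paper just phrases this via the division $m=3s+r$ rather than via the language of multiplicative order.
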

\begin{proof}
We can write $2^m+1 = 2^{3s+r}+1 $ for $m,r,s\in\mathbb{N}\cup \{0\}$ with $ 0\leq r\leq 2 $. Therefore for any $ m $, $ 2^m+1\pmod 7 = 2^r+1\pmod 7 $ for some $r\in\{0,1,2\}$. Therefore, $ 7\nmid (2^m+1)$ for all $m\in \mathbb{N}$.  
\end{proof}
\begin{lemma}
There exists a prime such that $ p\nmid (2^m+1)$ for all $m\in \mathbb{N}$.
\end{lemma}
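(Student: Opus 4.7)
The plan is to prove the existence statement by exhibiting an explicit witness. The preceding Lemma~\ref{Paper1_Lemma6.2} has already done the hard work: it establishes $7 \nmid (2^m+1)$ for every $m \in \mathbb{N}$. Since $7$ is prime, this is already a valid proof. So the first order of business is simply to invoke the previous lemma with $p = 7$.

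That said, I would like to record the conceptual reason behind the phenomenon, as it clarifies which primes have the property. The condition $p \nmid (2^m+1)$ for all $m \in \mathbb{N}$ is equivalent to saying that $-1$ is not a power of $2$ in the multiplicative group $(\mathbb{Z}/p\mathbb{Z})^\ast$. Letting $d = \mathrm{ord}_p(2)$, the cyclic subgroup $\langle 2 \rangle \subseteq (\mathbb{Z}/p\mathbb{Z})^\ast$ has order $d$: if $d$ is even then $2^{d/2}$ is the unique element of order $2$ in $\langle 2 \rangle$, hence equals $-1$; if $d$ is odd, every element of $\langle 2 \rangle$ has odd order and therefore cannot equal the order-$2$ element $-1$. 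Thus the primes satisfying the lemma are precisely those for which $\mathrm{ord}_p(2)$ is odd.

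With this reformulation, verifying $p = 7$ takes one line: $2^1 \equiv 2,\ 2^2 \equiv 4,\ 2^3 \equiv 1 \pmod{7}$, so $\mathrm{ord}_7(2) = 3$ is odd, and hence $-1 \notin \langle 2 \rangle$ in $(\mathbb{Z}/7\mathbb{Z})^\ast$. There is no real obstacle here; the lemma is essentially a corollary of Lemma~\ref{Paper1_Lemma6.2}, and the framework above only serves to explain why $7$ is merely the first of infinitely many such primes (e.g.\ $23, 31, 47, \dots$).
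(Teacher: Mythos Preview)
Your proof is correct and follows the same approach as the paper: both simply invoke Lemma~\ref{Paper1_Lemma6.2} to exhibit $p=7$ as the required prime. Your additional discussion of the equivalence with $\mathrm{ord}_p(2)$ being odd is correct and illuminating, but goes beyond what the paper records.
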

\begin{proof}
The proof is followed from Lemma \ref{Paper1_Lemma6.2}.
\end{proof}

\begin{theorem}\label{Paper1_Theorem6.2}
Suppose $ \mathcal{G} $ is a cyclic group of order $ n = p^s $ for an odd prime $ p $ and $ n \nmid (2^m+1)$ for all $m \in \mathbb{N}$. Then the number of LCD group codes in $\mathbb{F}_2  \mathcal{G} $ is $2^{(\frac{t+1}{2})}-1$.
\end{theorem}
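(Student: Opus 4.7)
The plan is to mimic the counting argument of the previous theorem, the only new ingredient being a careful analysis of when cyclotomic cosets are fixed under inversion. By Theorem~\ref{Paper1_Theorem3.4} an LCD group code is exactly $e\mathbb{F}_{2}\mathcal{G}$ for an idempotent satisfying $e=\hat{e}$, and by Lemma~\ref{Paper1_Lemma6.1} the component set $M$ of such an $e$ must be a union of some of the sets $P_{r_{1}},\dots,P_{r_{t}}$. Any such union automatically gives $e^{2}=e$ because each $P_{r_{i}}$ is closed under the squaring bijection $g\mapsto g^{2}$, so the only remaining constraint on $M$ is closure under inversion.

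Inversion in $\mathcal{G}$ corresponds to negation on $\mathbb{Z}_{n}$, which commutes with multiplication by $2$ and therefore permutes the cyclotomic cosets; let $\sigma$ denote the resulting involution on $\{P_{r_{1}},\dots,P_{r_{t}}\}$. The admissible $M$'s are precisely the unions of $\sigma$-orbits, and discarding the empty choice (which gives $e=0$) yields exactly $2^{N}-1$ LCD group codes, where $N$ is the number of $\sigma$-orbits. It thus suffices to show that $N=(t+1)/2$, i.e.\ that $\sigma$ has a unique fixed point, namely the trivial coset $\{g_{0}\}$.

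The key step, and the \emph{main obstacle}, is to promote the hypothesis $n\nmid 2^{m}+1$, stated at level $n=p^{s}$, to the finer fact that no non-trivial coset is self-paired under $\sigma$. I would argue by contradiction: if $P_{r_{i}}^{-1}=P_{r_{i}}$ for some $r_{i}\neq 0$, then $-j\equiv j\cdot 2^{m}\pmod{n}$ for some $j\in C_{r_{i}}\setminus\{0\}$ and some $m\ge 0$. Writing $n/\gcd(j,n)=p^{k}$ with $1\le k\le s$, this reduces to $p^{k}\mid 2^{m}+1$, equivalently $-1\in\langle 2\rangle\subseteq(\mathbb{Z}/p^{k})^{\times}$. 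Since $p$ is odd, the kernel of the reduction $(\mathbb{Z}/p^{s})^{\times}\twoheadrightarrow(\mathbb{Z}/p^{k})^{\times}$ has odd order $p^{s-k}$, so $\mathrm{ord}(2\bmod p^{s})$ and $\mathrm{ord}(2\bmod p^{k})$ have the same parity. Consequently $-1\in\langle 2\rangle\pmod{p^{k}}$ forces $-1\in\langle 2\rangle\pmod{p^{s}}$, contradicting the hypothesis. With the claim secured, $\sigma$ fixes exactly one coset and partitions the remaining $t-1$ cosets into $(t-1)/2$ pairs (so $t$ is automatically odd), giving $N=1+(t-1)/2=(t+1)/2$ and the stated count $2^{(t+1)/2}-1$.
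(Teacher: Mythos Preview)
Your proof is correct and follows the same overall strategy as the paper's: both reduce the count to showing that negation on $\mathbb{Z}_{n}$ fixes only the trivial cyclotomic coset, so that the $t$ cosets fall into $1+(t-1)/2=(t+1)/2$ orbits. Your formulation via the involution $\sigma$ and orbit-counting is a bit cleaner, but the content is the same.

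Where your argument genuinely improves on the paper is the justification of the key claim. The paper argues that if $h_{1}=g^{i2^{m_{1}}}$ and $h_{2}=g^{i2^{m_{2}}}$ in $P_{i}$ are inverse to each other then $(2^{m_{1}}+2^{m_{2}})\equiv 0\pmod{n}$ and hence $n\mid 2^{m_{3}}+1$; but this step tacitly cancels the factor $i$, which is only legitimate when $\gcd(i,n)=1$. For $n=p^{s}$ with $s\ge 2$ there are cosets with $p\mid i$, and there the paper's reduction yields only $p^{k}\mid 2^{m}+1$ for some $k<s$. You close this gap with the lifting argument: since the kernel of $(\mathbb{Z}/p^{s})^{\times}\twoheadrightarrow(\mathbb{Z}/p^{k})^{\times}$ has odd order $p^{s-k}$, the orders of $2$ modulo $p^{k}$ and modulo $p^{s}$ differ by a power of $p$ and hence have the same parity; as $-1$ lies in $\langle 2\rangle$ in a cyclic group of even order precisely when that order is even, $p^{k}\mid 2^{m}+1$ forces $p^{s}\mid 2^{m'}+1$ for some $m'$, contradicting the hypothesis. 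This is exactly the missing step, and it is where the assumption $n=p^{s}$ (rather than a general odd $n$) is actually used.
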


\begin{proof} 
From Lemma \ref{Paper1_Lemma6.1}, if $P_i\cap M \neq \phi$, $P_i\subseteq M$ for all $i\in [t]$. It is easy to see that, each $P_{i}$ is closed under square.

Next we claim that any $ P_i $ with $ \mid P_i\mid \geq 2$, for each $ h\in P_i $ implies that $ h^{-1}\notin P_i $. Suppose there exists  $m_{1}\in \mathbb{N}\cup\{0\}$ such that $h_1=g^{i2^{m_{1}} \pmod{n}} \in P_{i}$ has the inverse in $P_{i}$. So there exists  $m_{2}\in\mathbb{N}\cup \{0\} $ with $h_{2}=g^{i2^{m_{2}} \pmod{n}} \in P_{i}$ such that $h_1 h_2 = 1$. This shows that $(2^{m_{1}}+2^{m_{2}}) \equiv 0 \pmod{n}$. Since $(2,n) = 1$, so from above we get $ n \mid (2^{m_{3}}+1)$ for some $m_3\in\mathbb{N}$, which is a  contradiction.  Hence our claim is justified.

Since $n \nmid (2^m+1)$ for all $m \in \mathbb{N}$, so from the definition of the cyclotomic cosets, each cyclotomic coset $ C_{r_i}$ with more than one element, there exists another cyclotomic coset $C_{r_j}$ with the same cardinality such that $ C_{r_i}\cup C_{r_j} $ is closed under multiplication by $ 2 $ and $ -1 $. Such union of sets, we can construct a component set of $ e $ of an LCD group code. There are $ \frac{t-1}{2} +1 $ sets which are closed under multiplication by $ 2 $ and $ -1 $. Thus we get $2^{(\frac{t+1}{2})}-1$  number of such $M$. For each $ M $, we can construct exactly one LCD  group code in $\mathbb{F}_2 \mathcal{G} $. Hence the total number of LCD group codes in $\mathbb{F}_2  \mathcal{G} $ is $2^{(\frac{t+1}{2})}-1$.
\end{proof} 

\begin{theorem}\label{Paper1_Theorem6.3}
Suppose $ \mathcal{G} $ is a cyclic group of order $n = pq $ for two odd primes $p$ and $q$ and $n \nmid (2^l + 1)$ for all $l \in \mathbb{N}$. If $ q\nmid ( 2^l + 1 )$  for all $l \in \mathbb{N}$ and $p \mid (2^m + 1 )$ for some $m \in \mathbb{N}$,  then the cyclotomic coset $C_q$ is closed under inverse but $C_p$ is not. 
\end{theorem}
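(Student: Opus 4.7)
The plan is to analyze the cyclotomic cosets $C_q$ and $C_p$ via reduction modulo $p$ and $q$ respectively, using the Chinese remainder decomposition of $\mathbb{Z}_{pq}$. Since $\mathcal{G}$ is written multiplicatively with $g^i$ having inverse $g^{-i}$, closure of a cyclotomic coset $C_i \subseteq \mathbb{Z}_n$ under inverse translates to closure under negation modulo $n = pq$.

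First I would handle $C_q = \{q \cdot 2^j \pmod{pq} : j \geq 0\}$. Because $\gcd(q,p) = 1$, the map sending $a \in \{0,1,\dots,p-1\}$ to $qa \pmod{pq}$ is a bijection onto the multiples of $q$ in $\mathbb{Z}_{pq}$, and under this bijection $C_q$ corresponds exactly to the $2$-cyclotomic coset $\{2^j \bmod p : j \geq 0\}$ of $1$ modulo $p$. Thus $C_q$ is closed under negation in $\mathbb{Z}_{pq}$ if and only if $\{2^j \bmod p\}$ is closed under negation in $\mathbb{Z}_p$. The hypothesis $p \mid 2^m + 1$ gives $-1 \equiv 2^m \pmod p$, so $-2^j \equiv 2^{j+m} \pmod p$, which is in the same coset. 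Hence $C_q$ is closed under inverse.

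Next, by the symmetric computation, $C_p = \{p \cdot 2^j \pmod{pq}\}$ corresponds to the $2$-cyclotomic coset of $1$ modulo $q$ (using the bijection $a \mapsto pa$). If $C_p$ were closed under inverse there would exist $l \geq 0$ with $-p \equiv p \cdot 2^l \pmod{pq}$, equivalently $p(2^l + 1) \equiv 0 \pmod{pq}$, equivalently $q \mid 2^l + 1$. For $l = 0$ this would force $q \mid 2$, impossible since $q$ is odd; for $l \geq 1$ this contradicts the hypothesis $q \nmid 2^l + 1$. So $C_p$ is not closed under inverse.

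I do not foresee a serious obstacle: the argument is a direct translation, via CRT, of the hypothesis $p \mid 2^m + 1$ (respectively its negation for $q$) into closure (respectively non-closure) of the corresponding cyclotomic coset under $i \mapsto -i$. The only small point of care is ruling out the trivial index $l = 0$ in the second paragraph, which is handled by the oddness of $q$.
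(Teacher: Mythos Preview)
Your proof is correct and follows essentially the same approach as the paper. Both arguments reduce the question of closure of $C_q$ (resp.\ $C_p$) under negation modulo $pq$ to the existence (resp.\ nonexistence) of a power of $2$ congruent to $-1$ modulo $p$ (resp.\ $q$); you make this reduction explicit via the bijection $a \mapsto qa$ from $\mathbb{Z}_p$ onto the multiples of $q$ in $\mathbb{Z}_{pq}$, whereas the paper argues the same divisibility directly inside $\mathbb{Z}_{pq}$.
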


\begin{proof}
Let $ x\in C_q $. Then there exists $l \in \mathbb{N}\cup\{0\}$ such that $ x = q 2^{l} \pmod {n}$. To prove that $ C_q $ is closed under inverse, just we have to show that $ -x \in{C_q} $. Since $ (n, 2) = 1 $ and $p\mid (2^m + 1 )$ for some $m \in \mathbb{N}$, so there exists $ i\in \mathbb{N} $ such that $ q 2^{l + i } \pmod {n} = - x\in C_q $. Hence $ C_q $ is closed under inverse.

Let $l_1, l_2, l_3 \in\mathbb{N}\cup\{0\}$. To prove $ C_p $ is not closed under inverse, we suppose that $ x = p 2^{l_1} \pmod {n}$ has the inverse in $ C_p $. It shows that there exists $ y = p 2^{l_2}\pmod {n} $ in $ C_p $ for some $ l_2$ such that $ x + y = 0 \pmod {n} $. This shows that $ q\mid ( 2^{l_3} + 1 ) $ for some $l_3$. This makes a contradiction. This completes the proof.
\end{proof}

\begin{theorem}\label{Paper1_Theorem6.4}
Suppose $ \mathcal{G} $ is a cyclic group of order $n = pq $ for two odd primes $p$ and $ q$ and $ n \nmid (2^l+1)$ for all $l \in \mathbb{N}$. If $ p\nmid ( 2^l + 1 )$ and $ q\nmid ( 2^l + 1 )$ for all $ l\in\mathbb{N}$, then the cyclotomic cosets $ C_p $ and $ C_q $ are not closed under inverses.
\end{theorem}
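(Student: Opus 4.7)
The plan is to adapt the reasoning from the proof of Theorem \ref{Paper1_Theorem6.3}, in which non-closure of $C_p$ under inverses was extracted from the single hypothesis $q \nmid (2^l+1)$. Here both non-divisibilities $p \nmid (2^l+1)$ and $q \nmid (2^l+1)$ are in force, so a symmetric application of that argument should show that neither $C_p$ nor $C_q$ is closed under taking inverses.

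For $C_p$ I would argue by contradiction. Since $p = p \cdot 2^0 \in C_p$, closure under inverse modulo $n$ would supply some $l \in \mathbb{N} \cup \{0\}$ with $p + p \cdot 2^l \equiv 0 \pmod{n}$, i.e.\ $pq \mid p(2^l + 1)$, whence $q \mid (2^l+1)$. The boundary case $l = 0$ would force $q \mid 2$, which is impossible since $q$ is odd; for $l \geq 1$ the divisibility directly contradicts $q \nmid (2^l+1)$ for all $l \in \mathbb{N}$. The same reasoning with the roles of $p$ and $q$ interchanged — picking $q = q \cdot 2^0 \in C_q$ and invoking $p \nmid (2^l+1)$ — rules out closure of $C_q$ under inverses.

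There is essentially no obstacle beyond a brief boundary check: the $l = 0$ case requires the oddness of the primes, which is part of the hypothesis. The global assumption $n \nmid (2^l+1)$ is not separately invoked in either contradiction, although it is consistent with, and implied by, the two individual non-divisibilities together with $n = pq$. In particular, I do not need to choose two distinct exponents as in Theorem \ref{Paper1_Theorem6.3}: taking one of them to be $0$ already yields a contradiction and streamlines the argument.
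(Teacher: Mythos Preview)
Your proposal is correct and follows essentially the same route as the paper, which simply refers back to the second half of the proof of Theorem~\ref{Paper1_Theorem6.3} and applies it symmetrically to $C_p$ and $C_q$. Your choice to fix one exponent as $0$ (rather than working with two generic exponents $l_1,l_2$ and then factoring out a power of $2$) is a harmless and slightly cleaner specialization of that same argument.
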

 
\begin{proof}
This proof is followed by the last part of the proof of Theorem \ref{Paper1_Theorem6.3}.
\end{proof}

\begin{theorem}\label{Paper1_Theorem6.5}
Suppose $ \mathcal{G} $ is a cyclic group of order $n = pq $ for two odd primes $p$ and $ q$ and $ n \nmid (2^l+1)$ for all $l \in \mathbb{N}$. If $ p\mid (2^l + 1 )$ for some $ l\in \mathbb{N}$ and $ q\nmid (2^m + 1 ) $ for all $ m \in \mathbb{N}$, then the number of LCD group codes in $\mathbb{F}_2  \mathcal{G} $ is $ 2^{(\frac{t}{2}+1)} -1 $.
\end{theorem}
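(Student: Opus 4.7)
The plan is to classify every $2$-cyclotomic coset of $\mathbb{Z}_n$ by whether it is closed under inversion, then count the admissible component sets $M$ using Lemma \ref{Paper1_Lemma6.1} and Theorem \ref{Paper1_Theorem3.4}.

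First I would split $\mathbb{Z}_n$ into four subsets preserved by multiplication by $2$: the singleton $\{0\}$, the $p-1$ nonzero multiples of $q$, the $q-1$ nonzero multiples of $p$, and the $(p-1)(q-1)$ units of $\mathbb{Z}_n$. Every cyclotomic coset lies in exactly one of these pieces, so the self-inverse question can be analysed piece by piece. Using $p \mid 2^l + 1$ and imitating the first-half argument from the proof of Theorem \ref{Paper1_Theorem6.3}, I would show that every cyclotomic coset inside the multiples of $q$ is self-inverse: for $x = kq\cdot 2^s \pmod n$, taking $s'=s+l$ gives $kq\cdot 2^{s'}\equiv -x \pmod n$ since $2^l\equiv -1 \pmod p$. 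Using $q \nmid 2^m + 1$ together with the second-half argument of that proof, I would show that no cyclotomic coset inside the nonzero multiples of $p$ is self-inverse, so these cosets pair up with distinct inverse cosets of equal size. The same reasoning applied with $n \nmid 2^l + 1$ shows the cosets inside the units also pair up.

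By Lemma \ref{Paper1_Lemma6.1} and the idempotent characterization $e = e^2 = \hat e$, the component set $M$ of an LCD group code is a nonempty union of cyclotomic cosets closed under inversion: each self-inverse coset may or may not appear in $M$, while inverse-paired cosets must appear together or not at all. If $s$ denotes the number of self-inverse cosets and $r$ the number of inverse-pairs, then $s + 2r = t$ and the count of nonempty admissible $M$ is $2^{s+r} - 1$. Substituting the piece-wise tallies from the previous step should then yield the claimed $2^{t/2+1}-1$. The main obstacle is precisely this final bookkeeping: collapsing the four piece-wise contributions to the exponent $t/2 + 1$ requires a delicate enumeration of the $\langle 2\rangle$-orbits inside $\mathbb{Z}_p^*$ and $\mathbb{Z}_q^*$, together with careful use of $n \nmid 2^l+1$ to rule out any accidental self-inverse unit coset.
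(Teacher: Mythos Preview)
Your approach is the same as the paper's: reduce via Lemma \ref{Paper1_Lemma6.1} and Theorem \ref{Paper1_Theorem3.4} to counting nonempty unions of cyclotomic cosets closed under inversion, then tally self-inverse cosets $s$ against inverse-pairs $r$ to obtain $2^{\,s+r}-1$. The paper is terser than you: it simply singles out $C_0$ and $C_q$ as the two self-inverse cosets (citing Theorem \ref{Paper1_Theorem6.3}), asserts that all remaining $t-2$ cosets pair off, and reads $\frac{t-2}{2}+2=\frac t2+1$.

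Your caution about the final bookkeeping is warranted, and in fact it pinpoints a genuine gap that the paper's argument glosses over. Your own piece-wise analysis shows that \emph{every} coset inside the nonzero multiples of $q$ is self-inverse, not merely $C_q$; there are $(p-1)/\mathrm{ord}_p(2)$ of them, and this equals $1$ only when $2$ is a primitive root modulo $p$. The hypothesis $p\mid 2^{\,l}+1$ does not force that: with $p=17$ and $q=7$ all the stated hypotheses hold, yet there are two self-inverse cosets among the nonzero multiples of $7$, giving $t=9$ and $s+r=6$, so $2^{t/2+1}-1$ is not even an integer. Hence neither your outline nor the paper's proof can collapse the exponent to $\frac t2+1$ without an additional assumption such as $\mathrm{ord}_p(2)=p-1$; the obstacle you flagged is real, not cosmetic.
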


\begin{proof}
If $P_i\cap M \neq \phi$ for some $i\in [t]$, then by Lemma \ref{Paper1_Lemma6.1}, $P_i\subseteq M$. We see that each $P_{i}$ is closed under square. From Theorem \ref{Paper1_Theorem6.3}, $ P_q $ is closed under inverse. With the similar arguments of Theorem \ref{Paper1_Theorem6.2}, we conclude that any $ P_i (\neq P_q) $ with $ \mid P_i\mid >1$, for each $ h\in P_i $ implies that $ h^{-1}\notin P_i $. 

Since $n \nmid (2^l+1)$ for all $l \in \mathbb{N}$, from the definition of the cyclotomic cosets, each cyclotomic coset $ C_{r_i}$ with more than one elements, there exists another cyclotomic coset $C_{r_j}( r_i\neq q ) $ with the same cardinality such that $ C_{r_i}\cup C_{r_j} $ is closed under multiplication by $ 2 $ and $ -1 $. So we have total  $ \frac{t-2}{2} +2 = \frac{t}{2} + 1 $ number of sets which are closed under multiplication by $ 2 $ and $ -1 $. Such union of sets, we can construct a component set of $ e $ of an LCD group code.  There are $ 2^{(\frac{t}{2} + 1)}-1 $ such $M$. For each $M$, we can construct exactly one LCD  group code in $\mathbb{F}_2 \mathcal{G} $. Hence the total number of LCD group codes in $\mathbb{F}_2  \mathcal{G} $ is $ 2^{(\frac{t}{2} + 1)}-1 $.
\end{proof} 

\begin{theorem}
Suppose $ \mathcal{G} $ is a cyclic group of order $n = pq $ for two odd primes $p$ and $ q $ and $n \nmid (2^m+1)$ for all $m \in \mathbb{N}$. If $ p\nmid (2^m + 1 )$ and $ q\nmid (2^m + 1 ) $ for all $ m\in \mathbb{N}$, then the number of LCD group codes in $\mathbb{F}_2  \mathcal{G} $ is $ 2^{(\frac{t+1}{2})}-1 $.
\end{theorem}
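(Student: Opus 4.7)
The plan is to mimic the arguments of Theorems \ref{Paper1_Theorem6.2} and \ref{Paper1_Theorem6.5}, reducing the counting of LCD group codes $\mathcal{C} = e\mathbb{F}_2\mathcal{G}$ to the counting of admissible component sets $M$ of the idempotent $e$. By Theorem \ref{Paper1_Theorem3.4}, such an $M$ must satisfy $e^2 = e$ and $\hat e = e$; since $\mathcal{G}$ has odd order the squaring map is a bijection on $\mathcal{G}$, so in characteristic $2$ these two conditions translate respectively to $M$ being closed under squaring and closed under inversion. By Lemma \ref{Paper1_Lemma6.1}, $M$ must then be a union of the cyclotomic pieces $P_{r_1}, \dots, P_{r_t}$, each of which is automatically closed under squaring.

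The crucial step is to verify that, under the present hypotheses, no nontrivial piece $P_{r_i}$ (with $r_i \not\equiv 0 \pmod n$) is closed under inverse. Suppose for contradiction that $P_{r_i}$ contains both $h = g^{r_i 2^{m_1} \bmod n}$ and $h^{-1} = g^{r_i 2^{m_2} \bmod n}$. Then $r_i(2^{m_1} + 2^{m_2}) \equiv 0 \pmod n$, and since $\mathcal{G}$ has no element of order $2$ we must have $m_1 \neq m_2$; factoring out the common power of $2$ (permissible since $\gcd(2,n) = 1$) yields $r_i(2^{m_3} + 1) \equiv 0 \pmod n$ for some $m_3 \in \mathbb{N}$. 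Setting $d = \gcd(r_i, n)$, one has $d \in \{1, p, q\}$ because $r_i \not\equiv 0 \pmod n$ and $n = pq$. In these three respective cases the relation forces $n \mid (2^{m_3}+1)$, $q \mid (2^{m_3}+1)$, or $p \mid (2^{m_3}+1)$, each of which contradicts one of the three standing divisibility hypotheses.

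Consequently the involution $P_{r_i} \mapsto \widehat{P_{r_i}}$ on the set of $t$ cyclotomic pieces fixes only $P_0 = \{g_0\}$ and pairs the remaining $t-1$ pieces into $(t-1)/2$ two-element orbits, giving $(t-1)/2 + 1 = (t+1)/2$ orbits in total. An admissible $M$ is precisely a nonempty union of such orbits: any union of orbits is closed under both squaring and inversion, and conversely Lemma \ref{Paper1_Lemma6.1} together with the inverse-closure requirement forces $M$ to be of this form. Hence the number of admissible component sets, and therefore of LCD group codes in $\mathbb{F}_2\mathcal{G}$, equals $2^{(t+1)/2} - 1$, each $M$ producing exactly one idempotent via $e = \sum_{g \in M} g$.

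The main obstacle is the case analysis at the heart of the second paragraph: one must recognize that the three divisibility hypotheses of the statement correspond precisely to the three possible values of $\gcd(r_i, n)$ for a nontrivial index $r_i$, so that together they rule out inverse-closure for every nontrivial piece. Once this is in place, the pairing and counting argument is structurally identical to the one in Theorem \ref{Paper1_Theorem6.2}, which yields the same numerical formula under the different (prime-power) hypotheses.
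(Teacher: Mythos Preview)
Your proposal is correct and follows essentially the same approach as the paper: the paper's own proof is the one-line remark that the result follows from Theorems \ref{Paper1_Theorem6.4} and \ref{Paper1_Theorem6.5}, i.e., that no nontrivial cyclotomic piece is inverse-closed and the counting then proceeds exactly as in the earlier results. Your explicit $\gcd(r_i,n)\in\{1,p,q\}$ case analysis is in fact a bit more careful than the theorems the paper cites, since Theorem \ref{Paper1_Theorem6.4} only treats $C_p$ and $C_q$ while you also dispose of the cosets with $\gcd(r_i,n)=1$ via the hypothesis $n\nmid(2^m+1)$; but this is a refinement of the same argument, not a different route.
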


\begin{proof}
The proof is followed from Theorem \ref{Paper1_Theorem6.4} and Theorem \ref{Paper1_Theorem6.5}. 
\end{proof}

To prove the following theorem, we define \[ U  = \left\{ d_1 \colon n = d_1\cdot d_2~~\text{and}~~   d_1\nmid (2^l + 1)~~\text{for all}~~l \in\mathbb{N}~~ \text{and}~~ d_2\mid(2^m + 1)~~\text{for some}~~ m \in \mathbb{N} \right\}\] 
and $ \mathcal{S}' = \{ C_i : i\in U \} $.
 

\begin{theorem}
Suppose $  \mathcal{G}  $ is a  cyclic group of odd order $n$ and $ n\nmid ( 2^m + 1 ) $ for all $ m\in \mathbb{N} $. Let $ t_1 $ be the cardinality of $ S' $. Then the number of cyclic LCD group codes in $ F_2  \mathcal{G}  $ is $ 2^{(\frac{t + t_1 }{2})}-1 $.
\end{theorem}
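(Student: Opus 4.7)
The plan is to unify the case analysis of Theorems \ref{Paper1_Theorem6.2}--\ref{Paper1_Theorem6.5} into a single orbit-counting argument on the set of cyclotomic cosets. By Lemma \ref{Paper1_Lemma6.1}, the component set $M$ of any LCD idempotent is a union of certain $P_{r_i}$'s; since $\mathcal{G}$ is abelian of odd order, each $P_{r_i}$ is closed under squaring, which, combined with the Frobenius identity $(\sum a_g g)^2 = \sum a_g g^2$ in characteristic $2$, forces $e^2 = e$ automatically for any such union. Hence by Theorem \ref{Paper1_Theorem3.4} the LCD condition reduces to $M = M^{-1}$, i.e., $M$ must be a union of $P_{r_i}$'s that is invariant under the involution $\iota : P_{r_i} \mapsto P_{-r_i}$ on the set $\{P_{r_1},\ldots,P_{r_t}\}$.

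Next I would count $\iota$-invariant subsets by orbit structure. If $\iota$ has $a$ fixed points (self-inverse cosets) and $b$ two-cycles (paired cosets), then $a + 2b = t$, and the number of invariant subsets of $\{P_{r_1},\ldots,P_{r_t}\}$ is $2^{a+b} = 2^{(t+a)/2}$. Excluding the empty subset (which gives $e = 0$, the zero code excluded by the conventions of the preceding theorems) yields exactly $2^{(t+a)/2} - 1$ cyclic LCD group codes, so the theorem reduces to showing $a = t_1$.

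To identify $a = t_1$, I would use the following divisibility criterion. A coset $C_{r_i}$ is $\iota$-fixed if and only if there exists $k$ with $r_i(2^k + 1) \equiv 0 \pmod{n}$; writing $d_1 = \gcd(n, r_i)$ and $d_2 = n/d_1$, this is equivalent to $d_2 \mid 2^k+1$ for some $k$, which is precisely condition (ii) in the definition of $U$. Condition (i), that $d_1 \nmid 2^l + 1$ for all $l$, together with the standing hypothesis $n \nmid 2^m + 1$, would be used to ensure that the assignment $d_1 \mapsto C_{d_1}$ from $U$ into the collection of self-inverse cyclotomic cosets is a bijection, giving $|S'| = t_1 = a$. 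For the cosets that are not fixed, the pairing argument from Theorems \ref{Paper1_Theorem6.2} and \ref{Paper1_Theorem6.5} shows that $C_{r_i}$ and $C_{-r_i}$ are distinct of the same cardinality and together form a genuine $2$-cycle of $\iota$, closed under both squaring and inversion.

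The hard part will be the rigorous verification that $|S'| = a$: one must argue that condition (i) in the definition of $U$ correctly excludes divisors $d_1$ whose associated cosets $C_{d_1}$ would either fail to be self-inverse or be counted twice via a different divisor. I expect this to require careful bookkeeping of the multiplicative orders of $2$ modulo $d_1$ and $d_2$, extending the divisor-by-divisor case analyses already carried out for $n = p^s$ and $n = pq$.
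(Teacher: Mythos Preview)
Your orbit-counting framework is exactly the paper's argument, only phrased more cleanly: the paper also reduces to counting unions of $P_{r_i}$'s closed under $g\mapsto g^{-1}$, pairs off the non-self-inverse cosets, and arrives at $\frac{t-t_1}{2}+t_1=\frac{t+t_1}{2}$ building blocks. Where you write $2^{a+b}$ with $a+2b=t$, the paper writes $\frac{t-t_1}{2}+t_1$; the two agree precisely when $a=t_1$, which you rightly isolate as the crux.

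That crux, however, cannot be established with the paper's definition of $U$ and $S'$. Take $n=15$: the $2$-cyclotomic cosets are $C_0$, $C_1=\{1,2,4,8\}$, $C_3=\{3,6,9,12\}$, $C_5=\{5,10\}$, $C_7=\{7,11,13,14\}$; here $C_0,C_3,C_5$ are all self-inverse (so $a=3$) while $C_1\leftrightarrow C_7$. But the divisors $3$ and $5$ both violate condition~(i) in the definition of $U$ (since $3\mid 2^1+1$ and $5\mid 2^2+1$), so $U=\{15\}$, $S'=\{C_0\}$, and $t_1=1\neq a$. The correct number of nonzero self-adjoint idempotents is $2^{(5+3)/2}-1=15$, not $2^{(5+1)/2}-1=7$. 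Condition~(i) over-excludes: it throws out a divisor $d_1$ whose coset $C_{d_1}$ \emph{is} self-inverse whenever $d_1$ itself happens to divide some $2^l+1$. The paper's proof simply asserts, ``by the similar arguments,'' that every coset in $S'$ is closed under $-1$ and every coset in $S\setminus S'$ is not; neither assertion survives $n=15$. Your plan to verify $|S'|=a$ by ``careful bookkeeping of multiplicative orders'' will therefore not succeed as stated. What your orbit argument actually proves is the formula $2^{(t+a)/2}-1$ with $a$ the number of self-inverse $2$-cyclotomic cosets (equivalently, the number of self-reciprocal irreducible factors of $x^n-1$ over $\mathbb{F}_2$); matching this to $t_1$ would require amending the definition of $U$.
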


\begin{proof}
It is clear that $P_i\cap M \neq \phi$ for some $i\in [t] $. So by Lemma \ref{Paper1_Lemma6.1}, $P_i\subseteq M$. We see that each $P_{i}$ is closed under square. Consequently each element in $ S $ is closed under multiplication by $ 2 $. Now from the similar arguments of Theorem \ref{Paper1_Theorem6.3}, each element in $ S' $ is closed under multiplication by $ -1 $ and $ 2 $. By the similar arguments of Theorem \ref{Paper1_Theorem6.4}, each element in $ S\setminus S' $ is not closed under multiplication by $ -1 $. So by the definition of the cyclotomic coset for each $ C_j\in S\setminus S' $, there exists a coset $ C_i \in S\setminus S' $ with  same cardinality of $ C_j $ such that $ C_i\cup C_j $ is closed under multiplication by $ -1 $.




Clearly the cardinality of $ S\setminus S' $ is $ t - t_1 $, so we get  total $\frac{t-t_1}{2} + t_1  = \frac{t + t_1 }{2}$ number of sets  containing elements of $ \mathcal{G}$, which are closed under square and inverse. Therefore, we get total $2^{(\frac{t+t_1}{2})}-1$ such $M$. For each $M$, we can construct exactly one LCD group code in $\mathbb{F}_2 \mathcal{G} $. Hence the total number of LCD group codes in $\mathbb{F}_2  \mathcal{G} $ is $2^{(\frac{t+t_1}{2})}-1$.
\end{proof}

\section{Dimension and minimum distance}\label{Sec5}
In this section, We consider  $ \mathcal{G}$ is any finite group. We calculate the dimension and the distance of a group code by imposing some conditions on $ M $.

\begin{theorem}\label{Paper1_Theorem5.1}
Let $\mathcal{C}=e \mathbb{F}_2\mathcal{G}$ be an LCD group code such that the component set $ M $ of $e$ forms a subgroup of $ \mathcal{G} $. Then 

\begin{enumerate}
\item $\dim( \mathcal{C} )=[ \mathcal{G} :M]$. 
\item $d(\mathcal{C}) = wt(e)$.
\end{enumerate}
\end{theorem}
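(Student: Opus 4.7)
The plan is to exploit the idempotent structure $e^2=e=\hat e$ together with the hypothesis that $M$ is a subgroup, which forces $e$ to be (a scalar multiple of) the sum of all elements of $M$, and to realize $\mathcal{C}$ as an explicit direct sum of coset indicators.

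First I would write $e = \sum_{h\in M} h$, since the component set of $e$ is $M$. Since $\mathcal{C} = e\mathbb{F}_2\mathcal{G}$ is a right ideal, it is spanned as an $\mathbb{F}_2$-vector space by $\{e g : g\in\mathcal{G}\}$. The next key observation is that
\begin{equation*}
eg \;=\; \Bigl(\sum_{h\in M} h\Bigr) g \;=\; \sum_{h\in M} hg,
\end{equation*}
so $eg$ is the formal sum of all elements of the right coset $Mg$, with support exactly $Mg$ of size $|M|$. In particular $eg = eg'$ iff $Mg = Mg'$, so as $g$ ranges over $\mathcal{G}$ the vectors $eg$ take exactly $[\mathcal{G}:M]$ distinct values, one per right coset.

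For part (1), let $g_1,\dots,g_r$ (with $r=[\mathcal{G}:M]$) be a system of right coset representatives. The vectors $eg_1,\dots,eg_r$ have pairwise disjoint supports (the cosets partition $\mathcal{G}$), hence are $\mathbb{F}_2$-linearly independent. Together they span $\mathcal{C}$, so $\dim(\mathcal{C}) = r = [\mathcal{G}:M]$.

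For part (2), any $c\in\mathcal{C}$ can be written as $c = \sum_{i\in S} eg_i$ for some $S\subseteq\{1,\dots,r\}$. Because the supports $Mg_i$ are disjoint, the weight is additive:
\begin{equation*}
wt(c) \;=\; \sum_{i\in S} wt(eg_i) \;=\; |S|\cdot |M|.
\end{equation*}
If $c\neq 0$ then $|S|\geq 1$, so $wt(c)\geq |M| = wt(e)$, and equality is achieved by $c=e$ itself. Hence $d(\mathcal{C}) = wt(e)$. There is no real obstacle here; the only thing to be careful about is that we work with right cosets (matching the right-ideal structure of $\mathcal{C}$) and that distinctness of cosets translates into disjointness of supports, which in characteristic $2$ gives both linear independence and additivity of weights in one stroke.
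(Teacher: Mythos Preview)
Your proof is correct and follows essentially the same approach as the paper: both identify the basis of $\mathcal{C}$ with the coset indicators $eg_i$ (the paper calls them $f_i$), use disjointness of cosets for linear independence, and additivity of weight over disjoint supports for the minimum distance. Your version is in fact a bit more explicit in computing $eg=\sum_{h\in M}hg$ directly, but the underlying argument is identical.
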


\begin{proof}
Let $ [ \mathcal{G} :M] =  m $ and $ \mathcal{G} /M = \{Mg_{0}=M, Mg_{1}, \dots, Mg_{m-1} \}$, where $ g_0, g_1,\dots,g_{m-1}$ are in $ \mathcal{G} $. Suppose $f_{i}\in\mathcal{C}$ is a group codeword with component set consisting all the elements of $Mg_{i}$, for all $i \in \{ 0,1, \dots, m-1 \}$.

$(1)$ Since cosets are disjoint, $f_{0}, f_{1}, \dots, f_{m-1}$ are linearly independent over $ \mathbb{F}_2$. On the other hand, let $f\in \mathcal{C} $. Then $f = e ( a_1 + a _2 + \dots + a_l)$, for some $ a_1+a_2+\dots+a_l\in\mathbb{F}_2 \mathcal{G} $. Now  for all $j \in \{1,2, \dots, l \}$, components of $ e a_j $ are in $ M g_i $ for some $i$. So $ e a_1, e a_2, \dots, e a_l $ are in $ Span \{ f_0, f_1, \dots, f_{m-1} \}$. This shows that $ f\in  Span\{ f_0, f_1, \dots, f_{m-1} \} $. Hence $\{f_{0}, f_{1}, \dots, f_{m-1}\}$ generates $\mathcal{C}$. Therefore $\{ f_{0}, f_{1}, \dots, f_{m-1}\}$ is a basis for $\mathcal{C}$. So  $\dim(\mathcal{C}) = [ \mathcal{G} :M]$. 

$(2)$ Now $wt(f_{0}) = \cdots = wt(f_{m-1})= wt(e)$. Since any two cosets are pairwise disjoint, so \[wt(\sum_{i=0}^{m-1} f_{i}) \geq wt(f_{i}) = wt(e). \] Therefore the minimum distance of $\mathcal{C}$ is $wt(e)$.
\end{proof}
\begin{theorem}\label{Paper1_Theorem5.2}
Let $\mathcal{C}=e\mathbb{F}_2  \mathcal{G} $ be an LCD group code. Suppose $g_0\not\in M$ and $M\cup \{g_0\}$ forms a subgroup of $ \mathcal{G} $. Then

\begin{enumerate}
\item $\dim(\mathcal{C})=| \mathcal{G} |-[ \mathcal{G} :M\cup \{g_0\}]$. 
\item $d(\mathcal{C}) = 2$.
\end{enumerate}
\end{theorem}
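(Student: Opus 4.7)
The plan is to reduce everything to the previous theorem by writing $e$ as $\sigma_H + g_0$, where $H = M \cup \{g_0\}$ is the given subgroup and $\sigma_H := \sum_{h \in H} h$.

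First, I would observe that since $\mathcal{C}$ is LCD we have $e^2 = e$, and expanding $(\sigma_H + g_0)^2 = \sigma_H^2 + g_0$ in characteristic $2$ gives $\sigma_H^2 = \sigma_H$. Since $\sigma_H^2 = |H|\,\sigma_H$, this forces $|H|$ to be odd. Also $\hat{\sigma_H} = \sigma_H$ because $H$ is a subgroup. Hence by Theorem \ref{Paper1_Theorem3.6}, $\sigma_H\mathbb{F}_2\mathcal{G}$ is itself an LCD group code whose component set is the subgroup $H$, and by Theorem \ref{Paper1_Theorem5.1}(1), $\dim(\sigma_H\mathbb{F}_2\mathcal{G}) = [\mathcal{G} : H]$.

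Next I would exploit the orthogonal decomposition. From $\sigma_H + e = g_0 = 1$ and $\sigma_H e = \sigma_H + \sigma_H^2 = 0$, one gets that $\sigma_H$ and $e$ are orthogonal idempotents summing to $1$, so
\[ \mathbb{F}_2\mathcal{G} = \sigma_H\mathbb{F}_2\mathcal{G} \,\oplus\, e\mathbb{F}_2\mathcal{G} \]
as right $\mathbb{F}_2\mathcal{G}$-modules. Comparing $\mathbb{F}_2$-dimensions yields $\dim(\mathcal{C}) = |\mathcal{G}| - [\mathcal{G} : H] = |\mathcal{G}| - [\mathcal{G} : M \cup \{g_0\}]$, proving (1).

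For (2), I would first produce weight-$2$ codewords and then rule out weight-$1$ ones. For any $g \in M \subset H$, we have $\sigma_H g = \sigma_H$, so
\[ e(g + g_0) = (1+\sigma_H)(g+g_0) = g + g_0 + \sigma_H + \sigma_H = g + g_0, \]
showing $g + g_0 \in \mathcal{C}$ is a codeword of weight $2$, hence $d(\mathcal{C}) \le 2$. For the lower bound, suppose there were a weight-$1$ codeword, i.e., some $g \in \mathcal{G}$ with $g \in \mathcal{C}$. Since $e$ is idempotent, $y \in e\mathbb{F}_2\mathcal{G}$ iff $ey = y$, so $eg = g$, which gives $\sigma_H g = 0$. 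But $\sigma_H g = \sum_{h \in H} hg$ is a sum of $|H|$ distinct group elements and is therefore nonzero, a contradiction. Thus $d(\mathcal{C}) = 2$.

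The only real subtlety is the opening step of deducing $|H|$ odd so that $\sigma_H$ becomes an idempotent; once that is in place, the module-direct-sum decomposition does all of the heavy lifting for (1), and the two small computations $e(g+g_0) = g+g_0$ and $\sigma_H g \ne 0$ dispatch (2).
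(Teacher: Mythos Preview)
Your proof is correct. For part~(1) it is essentially the paper's argument in different clothing: your $\sigma_H$ is exactly the paper's $1-e$, so $\sigma_H\mathbb{F}_2\mathcal{G}=\mathcal{C}^\perp$, and both of you apply Theorem~\ref{Paper1_Theorem5.1}(1) to this ideal and then take the complementary dimension. The only cosmetic difference is that the paper invokes ``$\mathcal{C}$ LCD $\Rightarrow$ $\mathcal{C}^\perp$ LCD'' together with rank--nullity, while you phrase the complement as an idempotent decomposition $\mathbb{F}_2\mathcal{G}=\sigma_H\mathbb{F}_2\mathcal{G}\oplus e\mathbb{F}_2\mathcal{G}$ and verify along the way that $|H|$ is odd so that Theorem~\ref{Paper1_Theorem3.6} applies.

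For part~(2) the two arguments genuinely diverge. The paper orders $\mathcal{G}$ so that each coset of $H$ occupies a block of $|H|$ consecutive coordinates, writes down the resulting generator matrix of $\mathcal{C}^\perp$ as a block-diagonal all-ones matrix, and reads off $d(\mathcal{C})=2$ from this parity-check matrix (every column is nonzero, and columns within a block are equal). You instead exhibit the weight-$2$ codeword $g+g_0$ directly via $e(g+g_0)=g+g_0$ for $g\in M$, and exclude weight~$1$ by the one-line observation that $eg=g$ would force $\sigma_H g=0$. Your route is shorter and coordinate-free; the paper's route has the side benefit of producing an explicit parity-check matrix, which is what feeds into the MDS discussion in Section~\ref{Sec6}.
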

  
\begin{proof}
Since $\mathcal{C}$ is an LCD  group code, so $\mathcal{C} ^\perp $  is an LCD  group code. Since $\mathcal{C}=e \mathbb{F}_2  \mathcal{G} $, then by Theorem \ref{Paper1_Theorem3.4}, we get $\mathcal{C} ^\perp = (1-e) \mathbb{F}_2  \mathcal{G} $. So the component set of $(1-e)$ is $M 	\cup \{g_0 \}$, which forms a subgroup of $ \mathcal{G} $. 

$(1)$ By the Theorem \ref{Paper1_Theorem5.1}, $\dim (\mathcal{C} ^\perp )= [ \mathcal{G}  : M 	\cup \{g_0\}] $. Since $\mathcal{C}$ is LCD, so by the rank-nullity theorem  \[\dim(\mathcal{C})= | \mathcal{G} |-[ \mathcal{G} \colon M 	\cup \{g_0\}].\]
    
$(2)$ Let $M' = M\cup \{g_0 \}$ and $ \mid M' \mid = k $. Then by the Theorem \ref{Paper1_Theorem5.1}, $\dim(\mathcal{C}^\perp) = [ \mathcal{G} \colon M']$. Let $ \mathcal{G}/M' = \{b_1M', b_2M', \dots, b_l M'\}$, where $b_1, b_2, \dots, b_l $ are in $\mathcal{G}$. Let $f_{i}\in\mathcal{C}^\perp$ be a group codeword corresponding to the coset $b_i M'$. 

Using Equation (\ref{paper1_equn1}), without loss of generality, we map 

 \[   f_i  \rightarrow {\underbrace{0 \cdots \cdots 0}_{k(i-1)\text{-times}}\underbrace{1 \cdots \cdots 1}_{k\text{-times}}\underbrace{0 \cdots \cdots 0}_{k(l-i)\text{-times}}}.  \]

Let \[ \mathcal{H} = \begin{bmatrix} \Phi (f_1) \\ \Phi (f_2) \\ \vdots \\ \Phi (f_l) \end{bmatrix}. \] 

Each row of $ \mathcal{H}$ contains exactly consecutive $k$ positions are $1$ and remaining $lk-k$ positions are  $0$.

Now it is easily see that $ \mathcal{H}$ is a generator matrix of $\mathcal{C} ^\perp $. Therefore $ \mathcal{H}$ is a parity check matrix of $\mathcal{C}$. So  $d(\mathcal{C}) = 2$.
\end{proof}

\section{MDS group codes}\label{Sec6}
Here, we are able to connect some interesting properties between ring theory and coding theory. The relations are true for any finite group $ \mathcal{G}$.
\begin{theorem}
Let $\mathcal{C} = e \mathbb{F}_2  \mathcal{G} $ be an LCD group code. Suppose $ g_0 \not \in M$ and $M\cup \{g_0 \}$ forms a subgroup of $ \mathcal{G} $. Then $\mathcal{C}$ is maximal in $ \mathbb{F}_2  \mathcal{G} $ if and only if $\mathcal{C}$ is MDS.
\end{theorem}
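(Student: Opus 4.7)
The plan is to translate both ``MDS'' and ``maximal'' into conditions on the index $[\mathcal{G}:H]$, where $H := M\cup\{g_0\}$ is the subgroup given in the hypothesis, and then observe they collapse to the same statement $H=\mathcal{G}$.

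First I would dispose of the MDS side using the previous theorem. By Theorem \ref{Paper1_Theorem5.2}, $\dim(\mathcal{C}) = |\mathcal{G}| - [\mathcal{G}:H]$ and $d(\mathcal{C}) = 2$. So the Singleton equality $|\mathcal{G}| - \dim(\mathcal{C}) + 1 = d(\mathcal{C})$ reduces to $[\mathcal{G}:H] = 1$, i.e.\ $H = \mathcal{G}$. Thus $\mathcal{C}$ is MDS if and only if $M\cup\{g_0\} = \mathcal{G}$.

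For maximality, the key tool is the decomposition coming from the idempotent $e$. By Theorem \ref{Paper1_Theorem3.4} and the fact that $\mathcal{C}^\perp=(1-e)\mathbb{F}_2\mathcal{G}$, we have the internal direct sum $\mathbb{F}_2\mathcal{G} = \mathcal{C}\oplus\mathcal{C}^\perp$ of right ideals. Hence $\mathbb{F}_2\mathcal{G}/\mathcal{C}\cong \mathcal{C}^\perp$ as right $\mathbb{F}_2\mathcal{G}$-modules, and $\mathcal{C}$ is maximal if and only if $\mathcal{C}^\perp$ is a simple (equivalently, minimal nonzero) right ideal. Now $1-e = \sum_{h\in H} h$, and since $H$ is a subgroup, the component set of $1-e$ is $H$ itself; Theorem \ref{Paper1_Theorem5.2}(1) applied to $\mathcal{C}^\perp$ gives $\dim\mathcal{C}^\perp = [\mathcal{G}:H]$.

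The heart of the argument, and the step I expect to be the most delicate to write, is showing that $\mathcal{C}^\perp$ always contains a copy of the trivial submodule $\mathbb{F}_2\sigma$, where $\sigma := \sum_{g\in\mathcal{G}} g$. Let $T$ be a set of right coset representatives of $H$ in $\mathcal{G}$; then
\[
(1-e)\sum_{t\in T} t \;=\; \sum_{t\in T}\sum_{h\in H} ht \;=\; \sum_{g\in\mathcal{G}} g \;=\; \sigma,
\]
so $\sigma\in \mathcal{C}^\perp$. Since $\sigma\cdot g = \sigma$ for every $g\in\mathcal{G}$, the one-dimensional subspace $\mathbb{F}_2\sigma$ is a nonzero right submodule of $\mathcal{C}^\perp$. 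Therefore $\mathcal{C}^\perp$ is simple precisely when $\mathcal{C}^\perp = \mathbb{F}_2\sigma$, i.e.\ when $\dim\mathcal{C}^\perp = [\mathcal{G}:H] = 1$.

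Putting the two halves together, both ``$\mathcal{C}$ is MDS'' and ``$\mathcal{C}$ is maximal'' are equivalent to $H = \mathcal{G}$, and the theorem follows. The only thing to double-check carefully is that the one-line direct sum argument really identifies $\mathbb{F}_2\mathcal{G}/\mathcal{C}$ with $\mathcal{C}^\perp$ as right modules (not just as vector spaces); this is immediate from the idempotent splitting, but it is what makes the simplicity reformulation work.
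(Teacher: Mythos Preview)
Your argument is correct. Both you and the paper handle the MDS side identically, invoking Theorem~\ref{Paper1_Theorem5.2} to get $d(\mathcal{C})=2$ and $\dim(\mathcal{C})=|\mathcal{G}|-[\mathcal{G}:H]$, so that Singleton's equality becomes $[\mathcal{G}:H]=1$.

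The maximality side is where the two proofs diverge. The paper passes through the vector-space isomorphism $\Phi$ and argues that $\mathcal{C}$ maximal forces $\mathbb{F}_2\mathcal{G}/\mathcal{C}$ to be a field, hence of codimension~$1$. You instead use the idempotent splitting $\mathbb{F}_2\mathcal{G}=\mathcal{C}\oplus\mathcal{C}^{\perp}$ to identify $\mathbb{F}_2\mathcal{G}/\mathcal{C}$ with $\mathcal{C}^{\perp}$ as right modules, and then exhibit the concrete submodule $\mathbb{F}_2\sigma\subseteq\mathcal{C}^{\perp}$ to show that simplicity of $\mathcal{C}^{\perp}$ is equivalent to $\dim\mathcal{C}^{\perp}=[\mathcal{G}:H]=1$. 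Your route is more explicit and more robust: it works verbatim for arbitrary (possibly non-abelian) $\mathcal{G}$, where the quotient by a maximal two-sided ideal is a priori only a simple ring, and it sidesteps the question of why a field quotient must have $\mathbb{F}_2$-dimension~$1$ rather than being a larger extension. The trade-off is that you need the extra computation $(1-e)\sum_{t\in T}t=\sigma$, but that is a one-line calculation.

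One cosmetic point: when you compute $\dim\mathcal{C}^{\perp}=[\mathcal{G}:H]$, the directly applicable statement is Theorem~\ref{Paper1_Theorem5.1} (the component set of $1-e$ is the subgroup $H$ itself), not Theorem~\ref{Paper1_Theorem5.2}; of course the latter's proof routes through the former, so the content is the same.
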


\begin{proof}
Let $\mathcal{C}$ be an maximal ideal in $\mathbb{F}_2  \mathcal{G} $. From Equation (\ref{paper1_equn1}), we know that $\mathcal{C}$ is an ideal if and only if $\Phi (\mathcal{C})$ is a linear subspace in $\mathbb{F}_2^ n $. Since  $\Phi$ is an isomorphism, so $ \mathbb{F}_2 \mathcal{G} /\mathcal{C} \cong \mathbb{F}_2 ^n /\Phi(\mathcal{C})$. Since $\mathcal{C}$ is maximal, so $\mathbb{F}_2 \mathcal{G} /\mathcal{C}$ is a field. Therefore, $\dim (\mathbb{F}_2 \mathcal{G} /\mathcal{C}) = 1$ over $\mathbb{F}_2 \mathcal{G} /\mathcal{C} $.  This shows that $\dim (\Phi(\mathcal{C})) = n - 1$. Consequently  $\dim (\mathcal{C})= n - 1 $. From Theorem \ref{Paper1_Theorem5.2}, $d(\mathcal{C})= 2 $. Now $2=d= n -( n - 1)+1 $. Hence $\mathcal{C}$ is an MDS. The converse part of the theorem is followed from the similar arguments.
\end{proof} 

\begin{theorem}
Let $ \mathcal{G} $ be an even order group and $ \mathcal{C} = e \mathbb{F}_2  \mathcal{G}  $ be an LCD group code. If $ M  $ or $ M \cup \{g_0 \} $ forms a subgroup of $ \mathcal{G} $, then $\mathcal{C}$ can not be MDS.
\end{theorem}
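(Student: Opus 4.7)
The plan is to split along the two hypotheses and, in each case, use the dimension and distance formulas of Section~\ref{Sec5} to show that the MDS equality $n-k+1=d$ can hold only when the relevant subgroup is all of $\mathcal{G}$; a parity argument based on Cauchy's theorem then gives the contradiction.

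The main ingredient is Theorem~\ref{Paper1_Theorem3.5}: for an LCD group code, $M$ contains no element of order $2$. In the first case, where $M$ itself is a subgroup, Cauchy's theorem forces $|M|$ to be odd. In the second case, where $M\cup\{g_0\}$ is a subgroup and $g_0\notin M$, any order-$2$ element of $M\cup\{g_0\}$ must lie in $M$ (since $g_0$ has order $1$), so Cauchy again gives $|M\cup\{g_0\}|$ odd. In either situation the subgroup at stake has odd order and, since $|\mathcal{G}|$ is even, must be a \emph{proper} subgroup of $\mathcal{G}$.

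In Case~1, Theorem~\ref{Paper1_Theorem5.1} gives $\dim\mathcal{C}=[\mathcal{G}:M]$ and $d(\mathcal{C})=wt(e)=|M|$. Writing $m=|M|$ and $r=[\mathcal{G}:M]=n/m$, the MDS equation $n-k+1=d$ rearranges to $(m-1)(r-1)=0$. Setting aside the degenerate $m=1$ branch (where $e=g_0$ and $\mathcal{C}=\mathbb{F}_2\mathcal{G}$ is the whole space), we are forced to $r=1$, so $M=\mathcal{G}$, contradicting the odd-versus-even parity. In Case~2, Theorem~\ref{Paper1_Theorem5.2} gives $\dim\mathcal{C}=|\mathcal{G}|-[\mathcal{G}:M\cup\{g_0\}]$ and $d(\mathcal{C})=2$, so the MDS identity collapses to $[\mathcal{G}:M\cup\{g_0\}]=1$, i.e.\ $M\cup\{g_0\}=\mathcal{G}$, again clashing with the parity.

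The only subtle point I can foresee is cleanly disposing of the $m=1$ branch of Case~1, which corresponds to $e=g_0$ and $\mathcal{C}$ being the entire group algebra; this is routinely excluded from what one calls an MDS code in the present setting, but should be remarked on explicitly. Apart from that, both cases reduce to a one-line arithmetic check against the parity produced by Theorem~\ref{Paper1_Theorem3.5} together with Cauchy's theorem.
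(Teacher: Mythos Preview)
Your argument is correct and takes a somewhat different route from the paper's. The paper does not invoke Theorems~\ref{Paper1_Theorem5.1} and~\ref{Paper1_Theorem5.2} directly; instead it uses that if $\mathcal{C}$ is MDS then so is $\mathcal{C}^\perp$, giving $d+d^\perp=n+2$, and then bounds this sum above by $wt(e)+wt(1-e)\le 2|M|$ (since $e\in\mathcal{C}$ and $1-e\in\mathcal{C}^\perp$). This forces $|M|\ge n/2+1$, hence $M=\mathcal{G}$ by Lagrange, and Theorem~\ref{Paper1_Theorem3.5} together with $2\mid n$ then supplies the contradiction. Your approach is more explicit: you feed the exact formulas for $k$ and $d$ from Section~\ref{Sec5} into the MDS identity, reducing Case~1 to the factorisation $(m-1)(r-1)=0$ and Case~2 to $[\mathcal{G}:M\cup\{g_0\}]=1$. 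Both paths terminate at the same parity contradiction via Theorem~\ref{Paper1_Theorem3.5} and Cauchy. Your route has the merit of making the second case fully explicit (the paper dispatches it with ``similar arguments''), while the paper's support-inequality argument has the virtue of needing only the crude bound $d\le wt(e)$ rather than the exact value of $d(\mathcal{C})$. The degenerate $m=1$ branch you flag is equally an unstated assumption in the paper's proof, which needs $1-e\neq 0$ for the inequality $d^\perp\le wt(1-e)$ to be meaningful.
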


\begin{proof}
Let $ \mathcal{C} = e \mathbb{F}_2  \mathcal{G}  $ be an LCD MDS group code. Suppose $ M $ forms a subgroup of $ \mathcal{G} $. Suppose $\mathcal{C}$ has dimension $k$ and minimum distance $ d $, then we have $ d = n-k+1 $. Since $\mathcal{C}$ is an MDS, so $ \mathcal{C}^\perp$. Therefore, the minimum distance of $ \mathcal{C}^\perp $ is $ d^\perp = n-(n-k)+1 = k+1$.

Now, we have $ d + d^\perp = n + 2 $. Also, $ \mathcal{C}^\perp $ is of the form $ (1 - e) \mathbb{F}_2  \mathcal{G}  $, therefore \[ n + 2 = d + d^\perp \leq Support(e) + Support(1 - e) \leq 2|M| .\] Therefore,  
\begin{equation}\label{paper1_equn3}
\mid M\mid \geq \frac{ n  + 2 }{2} = \frac{ n }{2} + 1.  
\end{equation}

Since $M$ is a subgroup of $ \mathcal{G} $, so from Equation (\ref{paper1_equn3}), $ M =  \mathcal{G}  $. Now $2\mid n $, so $M$ has an element of order $2$, which contradicts the Theorem \ref{Paper1_Theorem3.5}. So $\mathcal{C}$ is not an LCD MDS group code.

If $ M\cup \{g_0 \} $ forms a subgroup of $ \mathcal{G} $, then with the similar arguments, we can conclude that $ \mathcal{C} $ is not an LCD MDS group code. This completes the proof.
\end{proof}

\section{Examples}\label{Sec7}
\begin{example}
Consider $  \mathcal{ \mathcal{G} }  = \mathbb{Z}_9 $. Here the order of $  \mathcal{G}   $ is $9$ and $ 9\mid(2^3+1)$. Now we partition $\mathbb{Z}_9 $ into cyclotomic cosets. The cyclotomic cosets are $ C_0 = \{0\} $, $ C_1 = \{   1 , 2 , 4 , 5 , 7 , 8 \}$, $ C_3 = \{3 , 6\}$. So the number of cyclotomic cosets is $ 3 $. Suppose $  \mathcal{ \mathcal{G} }  = \langle  g \rangle  $. If we choose  $ M = \{g^i : i\in C_j \} $ for any $ j\in\{ 0, 1 , 3 \}$, then $ e^2 = e = \widehat e $. Hence $ \mathcal{C} = e  \mathbb{F}_2  \mathcal{ \mathcal{G} }  $ is an LCD  group code. Also if we choose $ M = \{g^i : i\in C_j\cup C_k \} $ for any $ j , k\in\{ 0, 1 , 3 \}$ or $  M = \{g^i : i\in C_0 \cup C_1 \cup C_3 \}$, then $ e^2 = e = \widehat e $. Thus for such  $ e $, we get an LCD  group code. Hence the total number of LCD group codes in $\mathbb{F}_2 \mathbb{Z}_9 $ is $ 7 = (2^3 -1 )$.  
\end{example}

\begin{example}
Let $  \mathcal{ \mathcal{G} }  = S_3 $, $ e = (1) + (123) +(132) $. Then $ e^2 = e = \widehat e $. Here component set of $ e $ is $ M = \{(1), (123), (132) \} $, which  forms a subgroup of $  \mathcal{ \mathcal{G} }  $. Then $  \mathcal{ \mathcal{G} } /M = \{ M , (12)M \}$. Consider $ f_1 = (1) + (123) + (132) $ and $ f_2 = (12) + (23) + (13) $. Here $ \{ f_1, f_2 \} $  forms a basis of $ e  \mathbb{F}_2  \mathcal{ \mathcal{G} } $. Hence $\dim (\mathcal{C}) = 2 =[ \mathcal{ \mathcal{G} } :M]$. 
\end{example}

\begin{example}
Let $  \mathcal{ \mathcal{G} }  = S_3 $, $ e = (123) +(132) $. Then $ e^2 = e = \widehat e $. Now $ \{ (123) , (132) \}\cup \{ g_0 = 1_ \mathcal{ \mathcal{G} } \}$ forms a subgroup of $  \mathcal{ \mathcal{G} }  $. Now $ \mathcal{C} = e  \mathbb{F}_2  \mathcal{ \mathcal{G} } $ is an LCD  group code. In $ \mathcal{C} $, every nonzero element has atleast $ 2 $ nonzero coefficients. Hence $ d (\mathcal{C}) = 2 $. Since $ 2 \mid | \mathcal{ \mathcal{G} } | $, $ \mathcal{C} $ is not LCD MDS group code. 
\end{example}

\section{Conclusion}\label{Sec8}
In this paper, we have accomplished an extensive study of binary group codes. First, we have characterized binary LCD group codes. Then we have established that nontrivial cyclic LCD codes of length $2^a$ do he not exit. We also have shown the number of LCD cyclic group codes for odd order group. We have found the dimension and the minimum distance for binary group codes of any finite group. We have also built a relationship between ideals and MDS group codes. Finally, we gave some examples related to the results.

\end{document}